\NeedsTeXFormat{LaTeX2e}
\documentclass[11pt,a4paper]{article}

\usepackage{hyperref}
\hypersetup{colorlinks=true,citecolor=black,linkcolor=black,urlcolor=black,pdfview=FitH,pdfstartview=FitH}

\usepackage{mystyle}

\usepackage{times}
\clubpenalty = 10000 
\widowpenalty = 10000 \displaywidowpenalty = 10000 

\hoffset=-1cm
\textwidth = 15.5cm

\renewcommand{\forall}{\textrm{ for all }}
\usepackage{tocbibind}
\usepackage{natbib}
\bibpunct{(}{)}{,}{a}{}{;} 

\usepackage{mathrsfs}

\usepackage{amsthm}
\newtheorem{lemma}{Lemma}
\newtheorem{property}{Property}

\usepackage{todonotes}
\presetkeys{todonotes}{color=yellow!20}{}

\usepackage{tikz}
\usetikzlibrary{arrows}
\usetikzlibrary{decorations.text}
\usetikzlibrary{positioning}
\usetikzlibrary{calc}

\usepackage{pgfplots}
\pgfplotsset{compat=1.14}

\usepackage{amsmath,amsfonts}

\DeclareMathOperator*{\argminlex}{{argmin^{\it lex}}}

\usepackage[linesnumbered,lined,boxed,commentsnumbered]{algorithm2e}
\SetKwInOut{Input}{Input}
\SetKw{kwProblem}{Problem:}
\SetKwProg{Fn}{Function}{}{end}
\SetKw{KwBy}{increment by}
\SetEndCharOfAlgoLine{}

\usetikzlibrary{shapes.misc}
\tikzset{cross/.style={cross out, draw=black, minimum size=2*(#1-\pgflinewidth), inner sep=0pt, outer sep=0pt},
cross/.default={1pt}}

\usepackage{subcaption}

\newcommand{\range}[2]{\{{#1}, \ldots, {#2}\}}

\Author{Steffen Pottel and Asvin Goel}{Kühne Logistics University, 20457 Hamburg, Germany}

\date{\today}
\Title{Scheduling activities with time-dependent durations and resource consumptions}

\begin{document}
\maketitle
\begin{abstract}
In this paper we study time-dependent scheduling problems where activities consume a resource with limited availability.
Activity durations as well as resource consumptions are assumed to be time-dependent and the resource can be replenished between activities.
Because of the interaction of time-dependent activity durations and resource consumptions,
scheduling policies based on starting all activities as early as possible may fail due to unnecessarily high resource consumptions.
We propose a dynamic discretization discovery algorithm that generates a partially time-expanded network during the search.
We propose preloading techniques allowing to significantly reduce the computational effort if the approach is embedded in an iterative solution procedure that frequently evaluates activity sequences that start with the same activities.
We evaluate our approaches on a case of routing a fleet of electric vehicles in which vehicles can recharge batteries during the route.
\ \\[0.5cm]
{\bf Keywords:} time-dependent scheduling, resource constraints, break scheduling, preventive maintenance, electric vehicle routing
\end{abstract}

\section{Introduction}
Many operational activities consume some kind of resource throughout their execution.
For example, engines used for executing activities consume energy.
With the increasing replacement of internal combustion engines by battery electric propulsion systems, the limited battery capacity has to be considered when scheduling activities and it must be ensured that all activities can be executed without running out of energy \citep[see, e.g.,][]{PJL2016}.
Similarly, the time required by a human operator can be seen as a resource that is consumed throughout the execution of activities and which is limited by work regulations \citep[see, e.g.,][]{FMCSA11, EU03}.
Also machine times may be limited between preventive maintenance processes or inspection intervals, for example, aircrafts must be inspected before a total flight time of a given limit is reached \citep{LII}.

In this paper we study the problem of scheduling a sequence of activities where
activities consume a limited resource throughout execution.
We assume that both activity durations and resource consumptions depend on the time of execution.
A major cause of time-dependent activity durations and resource consumption is congestion, which can have a significant impact on the duration required to complete certain activities.
In the~15~largest urban areas in the United States, for example, travel times on freeways are 44~percent higher during peak periods than during free flow times \citep{urbanmobilityreport2019}.
Similarly, in warehouses and manufacturing facilities, congestion can occur due to space limitations and narrow aisles \citep{ZBN2009}.
Thus the time required for picking activities or moving a forklift may vary over the course of the day.
Time-dependent durations may also occur for other reasons.
For example, the total trip duration of a public transport user includes waiting time until the next scheduled departure and the actual travel time.
Thus, the total trip duration depends on the time of the day and can be represented as time-dependent function \citep{RH2016}.
In satellite imagery, as another example,
 the relative position of the satellite to an object on the earth surface depends on the position of the satellite in the orbit and the time required to tilt the satellite accordingly can be represented by a time-dependent function \citep{LLCH2017}.

Time-dependent activity durations can have various ramifications that have received very little attention in the scientific literature.
Consider, for example, the case of electric forklifts or vehicles.
If activity durations and energy consumptions are time-dependent, the timing of the activities can be of particular importance in order to ensure that the total energy consumption of a sequence of activities stays below the total energy provided by the battery.
Similarly, in the case of limited working times,
the total working time of an operator may exceed the maximum working time allowed if most activities are conducted during peak-hours.
Thus, activities may have to be shifted away from peak-hours in order not to exceed the legal limits on the working time.
Therefore, the commonly used scheduling policy to schedule activities as early as possible may fail due to unnecessarily high resource consumptions.

Sometimes resources can be regenerated after they have been consumed.
For example, an electric battery can be recharged, a human operator can take a rest as required by working regulations, or a maintenance process can reset the machine time available until the next maintenance is needed.
In order to accommodate for such cases, we also study the problem of jointly scheduling activities and replenishments.

The contributions of this paper are:
1) we introduce the time-dependent activity scheduling problem with resource constraints;
2) we show that an optimal solution can easily be found if activity durations fulfil the the so-called \emph{first in first out (FIFO)} property, i.e the property that an earlier start time implies an earlier or equal completion time, and resource consumption functions are non-decreasing in time;
3) we propose a dynamic discretization discovery algorithm for solving the problem in the more general case of non-monotonous consumption functions;
4) we extend the approach to the case where resources can be replenished between activities;
5) we propose an acceleration technique for iterative problem solving  by preloading vertices of the partially expanded network obtained from a previous iteration;
6) we propose new benchmark instances for time-dependent vehicle routing with electric vehicles;
7) we evaluate the algorithms on these benchmark instances.

\section{Related work}

Resources that are consumed throughout the course of activities can be found in various scheduling problems.
Often a capacity limit is given on such resources and the total consumption must not exceed the capacity.
In the case where each activity has a constant resource consumption, the total resource consumption can easily be obtained by summing up the constants.
This is for example the case in the capacitated vehicle routing problem \citep[see, e.g.,][]{ITV2014}
where the maximum load of the vehicle can be interpreted as a resource that is depleted by the customer demands.
In the multi-trip vehicle routing problem \citep[see, e.g.,][]{Fle1990,TLG96,CAFV2014}, this full capacity of the resource can be restored when the vehicle returns to the depot.
\cite{SSWL2019} provide a recent review of vehicle routing problems in which vehicle routes include stops for replenishments, refueling, and breaks.

For route planning problems with battery-powered vehicles \citep[see, e.g.,][]{PJL2016}, the energy consumption of the vehicles depends on the distance travelled.
Therefore, minimising route distances will help to create routes that can be conducted without running out of energy.
Recharging the battery may be possible in order to increase the range of a vehicle.
\cite{SSG2014} study a vehicle routing problem in which electric vehicles can visit charging stations during the routes to recharge the battery.
Unlike \cite{SSG2014} who assume that the battery is always fully recharged, \cite{FORT2014} allow to partially recharge the battery.
\cite{SDK2017} find optimal recharging policies for a given route determining when to recharge as well as the amount of energy charged.
\cite{MONTOYA201787} consider non-linear charging functions and optimise where and which amount to charge.
\cite{BDGWZ2019} find optimal routes on road networks for a realistic distribution of charging stations as well as non-linear charging functions allowing partial recharging.

In the vehicle routing and truck driver scheduling problem  \citep[see, e.g.,][]{TiGo2020,GV14_VRTDSP,PDDR10, Go09_VRTDSP},
vehicle routes have to be generated in such a way that all customers can be visited within time windows and all routes can be executed without violating government regulations with respect to driving time limits.
According to these regulations, truck drivers have to take breaks and rest periods at regular intervals.
The remaining time without a break or rest period can be regarded as a resource that is consumed when driving.
Although the driving time between two customers is assumed to be a given constant, time windows for customer visits may make it impossible to always drive as long as allowed by the regulations and only take breaks or rests when the legal driving time limits are reached.
Instead, determining feasibility  of each route requires the solution of a truck driver scheduling problem \citep[see, e. g.,][]{AS09, Go13_US2013TDSP, Go10_EUTDSP, GK12_EUTeamTDSP, GR12_CANTDSP, GAS12_AUSTDSP}.
Minimising schedule durations is an even more complicated problem that is studied, for example, by \cite{KHS2011}, \cite{Go12_MDTDSP,Go12_CANMDTDSP,Go12_AUSMDTDSP}, and \cite{RCL13}.

Most of the above-mentioned studies assume that activity durations 
are given as constant values.
In many real-life applications, however, the duration 
of activities depends on the time of the day.
Scheduling problems with time-dependent processing times have been surveyed by
\cite{CDL2004} and \cite{AW1999}.
\cite{VCGP2015} review activity scheduling problems in which timing decisions are of particular importance and possibly time-dependent.
\cite{SCM2008} study a scheduling problem where setups with a time dependent duration are required between pairs of jobs and formulate the problem as a time-dependent traveling salesman problem.
The  time-dependent traveling salesman problem is also studied, e.g.,  by \cite{NDSL2012}, \cite{MMM2017}, \cite{CGG2014}, \cite{TGJL2016}.
\cite{VHBS2019}  apply dynamic discretization discovery, originally proposed by \cite{BHMS2017}, for the time-dependent traveling salesman problem with time windows.
Dynamic discretization discovery works on a time-expanded network, where only a small fraction of the vertices in the time-expanded networks is actually generated.
For the case of arc costs which are non-decreasing in time, \cite{VHBS2019} provide criteria allowing to prove optimality of a solution in the partially expanded network.
\cite{HBNS2019eprint} also apply dynamic discretization discovery, but for the time-dependent shortest path problem.
For the case of piecewise linear travel times satisfying the FIFO property, they propose an improvement of the dynamic discretization discovery based on an observation in \cite{Foschini2012} allowing to
explore only a small fraction of breakpoints of the arc travel time functions.

A survey on vehicle routing problems with time-dependent travel times given by \cite{GGG2015}.
\cite{DRvWdK2013} present an exact approach for the time-dependent vehicle routing problem.
\cite{HYI2008} propose a method to determine an optimal time schedule for a route where travel times and costs are piecewise linear and lower semi-continuous time-dependent functions.
\cite{VS2017} show how ready time functions can be determined efficiently by composition of piecewise linear travel time functions.

The research concerning time-dependent activity duration and the consumption of limited resources which may have to be replenished is scarce.
\cite{DB2008} analyse a project scheduling problem with time-dependent activity requirements and workers, which obey French workforce regulations and have various skill sets.
A mixed-integer programing formulation and heuristic solution approaches are presented.
For home health care, \cite{RH2016} present a scheduling approach considering break requirements for healthcare personnel assuming that staff members use a combination of walking and public transport. Travel times with public transport are modelled as time-dependent functions, where staff members may have to wait for the next tram, train, or metro.
The scheduling of breaks is conducted based on heuristic rules.
\cite{KHS2011} present a mixed-integer program for the problem of finding a schedule of minimal duration for a truck driver who must take breaks as required by European Union regulations.
The problem is extended by time-dependent travel times and used as a subproblem for route evaluations within an insertion heuristic. The subproblem is solved with a general purpose mixed-integer programming solver.
\cite{KLeff2019} studies the problem of finding a time-dependent shortest path from an origin to a destination in which breaks can be taken at parking lots in the network.
The driving time is limited by European Union regulations.
An approach combining forward exploration from the origin to a parking lot and backwards exploration from the destination to a parking is presented.

\section{Problem description}
Let us consider a sequence of activities $(1,2,\ldots,n)$ to be conducted in the given order and without preemption.
The time-dependent duration and resource consumption of each activity $i$ are given by functions $\tau_{i}(t_i)$ and $\rho_{i}(t_i)$, respectively, where $t_i$ refers to the start time of the activity.
Each activity~$i$ must start within a given time interval $[e_i,l_i]$ and the cumulative resource consumption must not exceed a given quantity $Q$.
With $\theta_{i}(t_i) = t_i + \tau_{i}(t_i)$ for all $i\in\{1,...,n\}$, the \emph{time-dependent activity scheduling problem (TDASP)} can be represented by

minimise
\begin{equation}\label{constr:base:completion}
\theta_n(t_n)
\end{equation}
subject to
\begin{equation}\label{constr:base:timing}
\theta_i(t_i) \leq t_{i+1}  \forall 1\leq i < n
\end{equation}

\begin{equation}\label{constr:base:capacity}
\sum_{i=1}^n \rho_i(t_i) \leq Q
\end{equation}
\begin{equation}\label{constr:base:t}
t_{i} \in[e_i,l_i]  \forall 1 \leq i  \leq n
\end{equation}

The objective (\ref{constr:base:completion}) is to minimise the completion time.
Constraint (\ref{constr:base:timing}) demands that the start time of any activity must not be before the completion time of the previous activity.
Constraint (\ref{constr:base:capacity}) requires that the cumulative resource consumption does not exceed the available amount.
Lastly, the domain of the variables is given by (\ref{constr:base:t}).

Throughout this paper we assume that the completion time functions $\theta_i(\cdot)$ are non-decreasing for all $i\in\{1,...,n\}$, meaning that completion times satisfy the FIFO property.
Without loss of generality we assume in the remainder of this paper that time windows are restricted in such a way that $\theta_i(e_i) \leq e_{i+1}$  and $\theta_i(l_i) \leq l_{i+1}$ for all $1\leq i < n$.

\begin{lemma}
Assume we have a sequence $t_1, t_2, \ldots, t_n$ with $t_1=e_1$ and $t_i = \max\{\theta_{i-1}(t_{i-1}),e_i\}$ for each $1< i\leq n$.
If $\rho_{i}(\cdot)$ is non-decreasing for $1\leq i\leq n$ and constraints~(\ref{constr:base:timing}) to~(\ref{constr:base:t}) are satisfied, then $t_1, t_2, \ldots, t_n$ is an optimal solution of the TDASP.
If any of the constraints~(\ref{constr:base:timing}) to~(\ref{constr:base:t}) is violated no feasible solution exists.
\end{lemma}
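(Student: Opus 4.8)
The plan is to show that the recursively defined sequence $t_1, \ldots, t_n$ yields the component-wise smallest start times among all schedules satisfying the timing and time-window constraints, and then to turn this dominance property into both the optimality statement and the infeasibility certificate by invoking the two monotonicity assumptions separately.

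First I would prove the following dominance claim by induction on $i$: for every schedule $s_1, \ldots, s_n$ satisfying constraints~(\ref{constr:base:timing}) and~(\ref{constr:base:t}), we have $t_i \leq s_i$. The base case is immediate, since $t_1 = e_1 \leq s_1$ by~(\ref{constr:base:t}). For the inductive step, constraint~(\ref{constr:base:t}) gives $s_i \geq e_i$, while constraint~(\ref{constr:base:timing}) together with the induction hypothesis $t_{i-1} \leq s_{i-1}$ and the FIFO (non-decreasing) property of $\theta_{i-1}$ yields $s_i \geq \theta_{i-1}(s_{i-1}) \geq \theta_{i-1}(t_{i-1})$; taking the maximum of these two lower bounds gives exactly $t_i = \max\{\theta_{i-1}(t_{i-1}), e_i\} \leq s_i$. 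I would also record that the greedy sequence satisfies constraint~(\ref{constr:base:timing}) automatically, because $t_{i+1} \geq \theta_i(t_i)$ by construction, and that $t_i \geq e_i$ holds by construction as well.

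With the dominance claim in hand, both conclusions follow. For optimality, assume the greedy sequence is feasible; any feasible competitor $s$ then satisfies $t_n \leq s_n$, so the FIFO property of $\theta_n$ gives $\theta_n(t_n) \leq \theta_n(s_n)$, i.e. the greedy completion time is minimal. For the infeasibility certificate I would argue contrapositively: if any feasible schedule $s$ exists at all, then the greedy sequence is itself feasible. Indeed, $t_i \leq s_i \leq l_i$ combined with $t_i \geq e_i$ gives constraint~(\ref{constr:base:t}); constraint~(\ref{constr:base:timing}) holds by construction; and here the second monotonicity hypothesis enters, since each $\rho_i$ being non-decreasing together with $t_i \leq s_i$ yields $\sum_{i=1}^n \rho_i(t_i) \leq \sum_{i=1}^n \rho_i(s_i) \leq Q$, so the capacity constraint~(\ref{constr:base:capacity}) holds too. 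A violation of any constraint by the greedy sequence therefore rules out the existence of any feasible solution.

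The crux of the argument, and the reason both monotonicity hypotheses are genuinely needed, is that the component-wise minimal start times play a dual role: minimality of $t_n$ under FIFO minimizes the objective, while minimality of each $t_i$ under non-decreasing $\rho_i$ minimizes the total resource consumption. I expect the only delicate point to be the logical bookkeeping in the infeasibility direction: the dominance claim is quantified over feasible schedules, yet it is used to conclude that the greedy sequence is feasible, so one must set up the contrapositive carefully, deriving every constraint for the greedy sequence, including the capacity bound, as inherited from an assumed feasible $s$ rather than assumed outright.
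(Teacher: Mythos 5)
Your proof is correct and is exactly the argument the paper has in mind: the paper's own proof merely asserts that the claim is trivial given the FIFO property and non-decreasing $\rho_i(\cdot)$, and your component-wise dominance induction ($t_i \leq s_i$ for every schedule $s$ satisfying the timing and time-window constraints) is the standard way to make that assertion precise. Your careful handling of the contrapositive in the infeasibility direction, inheriting the capacity bound from a hypothetical feasible schedule, supplies the one step the paper leaves entirely implicit.
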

\begin{proof}
As the completion time functions $\theta_{i}(\cdot)$ are non-decreasing by the FIFO property,
the proof is trivial for non-decreasing resource consumption functions $\rho_{i}(\cdot)$.
\end{proof}

Resource consumption functions can be non-decreasing in scheduling problems with deterioration \citep[see, e.g.,][]{Mos1994}.
However, in other cases, the resource consumption may, e.g., depend on congestion levels and vary throughout the day.
In such cases it may be possible to reduce the total resource consumption by delaying the start time of some of the activities.
In the following sections we present solution approaches for cases in which resource consumption functions are not monotonous.

\subsection{Time-expanded networks}

An approximation of the solution of the TDASP can be obtained by discretising time and modelling the problem as a time-expanded network.
Let us assume we are given a parameter $\varepsilon>0$ for which $e_i$ and $l_i$ are a multiple of $\varepsilon$ for each activity~$i$.
Then, we can create for each activity $i$ and each time point $t_i \in \{e_i, e_i+\varepsilon, \ldots, l_i\}$
a vertex $(i,t_i)$ in the time-expanded network.
The edges of the time-expanded network are obtained by connecting every vertex $(i,t_i)$ with vertex $(i+1,t_{i+1})$ if and only if $t_{i+1} \geq\theta_i(t_i)$.
Figure~\ref{fig:time-expanded} illustrates such a time-expanded network with vertex sets $V_i$ for each~$1\leq i \leq n$.

\begin{figure}[htb]
\begin{center}
\mbox{%
\begin{tikzpicture}[->,>=stealth',shorten >=1pt,auto,node distance=1.5cm,thick]
\tikzset{label/.style={font=\large\bfseries}}
\tikzset{activity/.style={circle,draw,minimum size=10pt,font=\large\bfseries}}
\tikzset{replenishment/.style={circle,fill=black,draw,radius=3pt}}
  \node[activity] (1a) [] {};
  \node[activity] (1b) [below of=1a] {};
  \node[activity] (1c) [below of=1b] {};
  \node[label] (1dot) [below of=1c] {$\vdots$};

  \node[activity] (2a) [right = 2cm of 1a] {};
  \node[activity] (2b) [below of=2a] {};
  \node[activity] (2c) [below of=2b] {};
  \node[label] (2dot) [below of=2c] {$\vdots$};

  \node[activity] (3a) [right = 2cm of 2a] {};
  \node[activity] (3b) [below of=3a] {};
  \node[activity] (3c) [below of=3b] {};
  \node[label] (3dot) [below of=3c] {$\vdots$};

  \node[label] (4a) [right = 2cm of 3a] {$\hdots$};
  \node[label] (4b) [below of=4a] {$\hdots$};
  \node[label] (4c) [below of=4b] {$\hdots$};
  \node[label] (4dot) [below of=4c] {$\vdots$};

  \node[activity] (5a) [right = 2cm of 4a] {};
  \node[activity] (5b) [below of=5a] {};
  \node[activity] (5c) [below of=5b] {};
  \node[label] (5dot) [below of=5c] {$\vdots$};

  \node[label] (N1) [above = 0.25cm of 1a] {$V_1$};
  \node[label] (N2) [above = 0.25cm of 2a] {$V_2$};
  \node[label] (N3) [above = 0.25cm of 3a] {$V_3$};
  \node[label] (N4) [above = 0.25cm of 5a] {$V_n$};

  \node[label] (t1) [left = 0.25cm of 1a,text width=1.1cm] {$e_i$};
  \node[label] (t2) [left = 0.25cm of 1b,text width=1.1cm] {$e_i+\varepsilon$};
  \node[label] (t3) [left = 0.25cm of 1c,text width=1.1cm] {$e_i+2\varepsilon$};



  \path
    (1a) edge (2a)
    (1a) edge (2b)
    (1a) edge (2c)
    (1a) edge (2dot)
    (1b) edge (2c)
    (1b) edge (2dot)
    (1c) edge (2dot)
;

  \path
    (2a) edge (3a)
    (2a) edge (3b)
    (2a) edge (3c)
    (2a) edge (3dot)
    (2b) edge (3c)
    (2b) edge (3dot)
    (2c) edge (3dot)
;

  \path
    (3a) edge (4a)
    (3a) edge (4b)
    (3a) edge (4c)
    (3a) edge (4dot)
    (3b) edge (4c)
    (3b) edge (4dot)
    (3c) edge (4dot)
;

\path
  (4a) edge (5a)
  (4a) edge (5b)
  (4a) edge (5c)
  (4a) edge (5dot)
  (4b) edge (5c)
  (4b) edge (5dot)
  (4c) edge (5dot)
;

\end{tikzpicture}
}
\end{center}
\caption{A time-expanded network}\label{fig:time-expanded}
\end{figure}
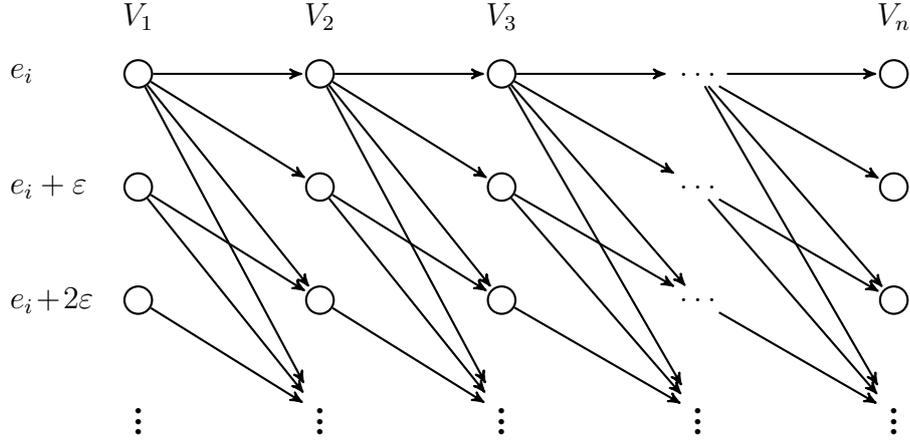

When setting $q_{(i,t)} = \rho_i(t)$ for all $(i,t) \in V = \bigcup_{i=1}^n V_i$, we can solve the TDASP by calling the function ${\tt solve}\big(V,\left(q_{(i,t)}\right)_{(i,t)\in V}\big)$ shown in Algorithm~\ref{alg:A*}.
Note, that we could have simply used $\rho_i(t)$ instead of $q_{(i,t)}$ in the algorithm, however, the used notation allows us in the following sections to use different values of $q_{(i,t)}$ to accelerate the search.
\begin{algorithm}[htbp]
\KwData{$(e_i,l_i,\tau_i,\theta_i,\rho_i)_{i\in\{1,\ldots,n\}}, Q$}
\Fn(){${\tt TEN\!::\!solve}\big(V,\left(q_{(i,t)}\right)_{(i,t)\in V}\big)$}{
$\bar V \leftarrow \emptyset$\;
 \ForEach{$(i,t) \in V$}{
   $\ell_{(i,t)} \leftarrow \left\{\begin{array}{ll}0 & \textrm{ if } i=1\\ \infty & \textrm{ else }\end{array}\right.$\;
 }
 \While{ exists $(i,t)\in V \setminus \bar V$ with $\ell_{(i,t)} + q_{(i,t)} \leq Q$}{
  $(i,t) \leftarrow \displaystyle\argminlex_{(i,t)\in V \setminus \bar V : \ell_{(i,t)} + q_{(i,t)} \leq Q} \Big\{ \big(\theta_i(t) + \sum_{j=i+1}^n \min_{t'\in [e_j,l_j]}\tau_j(t'),\ i, t \big) \Big\}$\;\label{line:selection}
  $\bar V \leftarrow \bar V \cup \{  (i,t) \}$\;
  \lIf { $i = n$ } {\textbf{break} \label{line:A*:break} }
  \ForEach{$t' \in \{t'  \mid (i+1,t') \in V\setminus \bar V,  \theta_i(t) \leq t'  \}$}{
   \If{ $\ell_{(i,t)} + q_{(i,t)} < \ell_{(i+1,t')}$  } {
    $\ell_{(i+1,t')} \leftarrow \ell_{(i,t)} + q_{(i,t)}$\;
    $p_{(i+1,t')} \leftarrow (i,t)$
   }\label{line:labelupdate}
  }
  }
 \Return ${\tt solution}\big(\bar V,\left(q_v,\ell_{v},p_v\right)_{v\in \bar V}\big)$\;
}
 \caption{Solving the TDASP in a time-expanded network}\label{alg:A*}
\end{algorithm}

The algorithm  is similar to an A*-Algorithm for shortest path problems \citep{Astar}.
However, labels are not related to objective function values.
Instead, labels indicate the cumulative resource consumption until the start of each activity at a particular time.
The algorithm  starts with initializing a set of vertices $\bar V$ for which the label is permanently set.
It tentatively sets all labels associated to the first activity to zero, and all other labels to infinity.
As long as there is a vertex in $V\setminus \bar V$ which has a label allowing to conduct the activity without exceeding the capacity, the algorithm selects such a vertex~$(i,t)\in V\setminus \bar V$ according to a lexicographical ordering on the lower bound on the completion time of the last activity, the activity index, and the time associated to the vertex.
The lexicographical ordering is necessary to ensure that all relevant predecessors are considered before selecting a vertex.
The selected vertex is inserted in $\bar V$.
If the selected vertex belongs to the last activity, the algorithm terminates.
Otherwise, non-permanent labels of the next activity are updated if their value can be reduced.
Whenever a label is updated, the predecessor $p_v$ for the vertex is updated so that the solution can be reconstructed by a function  ${\tt solution}\big(\bar V,\left(q_v,\ell_{v},p_v\right)_{v\in \bar V}\big)$.
If a solution is found, this function returns the vertices of the solution obtained by selecting the earliest vertex $(n,t)$ with $\ell_{(n,t)} + q_{(n,t)} \leq Q$ and iterating through its predecessors.
If no solution is found, the function returns $\big\{ (1,\infty), \ldots, (n,\infty) \big\}$.
For brevity reasons we omit a detailed description of this trivial function.
\begin{lemma}
\label{lemma:A*}
Algorithm~\ref{alg:A*} finds an optimal solution of the TDASP restricted to the time-expanded network with vertex set $V$ if and only if such a solution exists.
\end{lemma}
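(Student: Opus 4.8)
The plan is to prove Lemma~\ref{lemma:A*} as a correctness result for a label-setting search in the spirit of Dijkstra's algorithm, but with the crucial twist that vertices are selected by the completion-time lower bound in line~\ref{line:selection} rather than by their label $\ell_{(i,t)}$. I would first fix the intended meaning of the quantities: the label $\ell_{(i,t)}$ should equal the minimum cumulative consumption $\sum_{j<i} q_{(j,t_j)}$ over all schedules that respect the timing constraint~(\ref{constr:base:timing}) and reach vertex $(i,t)$; call this minimum $R^*(i,t)$. Since consumptions are non-negative, a prefix whose cumulative consumption already exceeds $Q$ can never be completed within budget, which is exactly what legitimises restricting the search, and the pruning $\ell_{(i,t)}+q_{(i,t)}\le Q$, to within-budget prefixes. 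The objective value of a complete schedule ending at $(n,t)$ equals $\theta_n(t)$, which is precisely the selection key of a last-layer vertex, so minimising $\theta_n$ amounts to reaching the within-budget last-layer vertex of smallest key.

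The first key step is to show that along every edge $(i,t)\to(i+1,t')$ the lexicographic key strictly increases. This is where the lower bound does its work: from $t'\ge\theta_i(t)$ and $\tau_{i+1}(t')\ge\min_{s}\tau_{i+1}(s)$ one obtains $\theta_{i+1}(t')\ge\theta_i(t)+\min_{s}\tau_{i+1}(s)$, so the completion-time component of the key is non-decreasing along the edge; if it stays constant, the activity-index component breaks the tie because $i<i+1$. Hence a predecessor always has a strictly smaller key than its successor, which is the precise role of the lexicographic ordering announced in the text.

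The second step, and the main obstacle, is to prove that whenever a within-budget vertex is selected, every within-budget predecessor has already been selected with its final label. Because selection is by key and not by label, a predecessor with smaller key could in principle still be \emph{ineligible} at the moment its successor is chosen, if its label has not yet been relaxed below $Q$. I would rule this out by contradiction: if a within-budget vertex $(i,t)$ were selected while a within-budget predecessor $(i-1,t')$ were still unselected, that predecessor must be temporarily ineligible and hence awaiting a relaxation (line~\ref{line:labelupdate}) from one of its own predecessors, which must therefore be selected \emph{after} $(i,t)$ although it carries a still smaller key. Iterating this pushes the argument down one layer at a time and must terminate at a first-layer vertex, which is initialised with label $0$ and is thus eligible from the outset --- a contradiction. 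Consequently within-budget vertices are selected in strictly increasing key order, and a routine induction on the layer index then yields $\ell_{(i,t)}=R^*(i,t)$ at selection: at that moment $\ell_{(i,t)}$ is the minimum of $\ell_{(i-1,t')}+q_{(i-1,t')}$ over the already-selected predecessors, and the predecessor realising $R^*(i,t)$ satisfies $R^*(i-1,t')+q_{(i-1,t')}=R^*(i,t)\le Q$, so it is itself within budget and hence among them.

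To conclude, since within-budget vertices are processed in increasing key order and a last-layer vertex has key $\theta_n(t)$, the first last-layer vertex selected, which triggers the break in line~\ref{line:A*:break}, has the smallest completion time among all within-budget last-layer vertices; by label correctness its stored predecessor chain is a genuine feasible schedule attaining $R^*(n,t)\le Q-q_{(n,t)}$, so it is optimal for the network. For the equivalence, I would note that if a feasible schedule exists in the network then its terminal vertex is within budget and, being eventually reached in increasing key order, forces the loop to select a last-layer vertex before the eligible set is exhausted; conversely, if no feasible schedule exists then no last-layer vertex is ever within budget, the loop terminates without the break, and the reconstruction routine returns the ``no solution'' tuple. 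This establishes the stated ``if and only if''.
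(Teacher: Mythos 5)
Your proof is correct and follows essentially the same label-setting argument as the paper: the admissible completion-time heuristic makes every predecessor's lexicographic key strictly smaller than its successor's, so the relevant predecessors are finalised first, labels equal the minimal within-budget cumulative consumption at selection time, and the first last-layer vertex selected is therefore optimal, while non-termination of the break certifies infeasibility. If anything you are more careful than the paper's two-line proof, which asserts that \emph{every} potential predecessor is selected in an earlier iteration (not literally true for permanently over-budget predecessors, and silent on temporarily ineligible ones); your descent-to-the-first-layer contradiction supplies exactly the missing justification, restricting the claim to within-budget predecessors and showing that a not-yet-relaxed one would leave an eligible smaller-key vertex unselected.
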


\begin{proof}
Assume that vertex $(i,t)$ is selected in an iteration of the algorithm.
Due to the FIFO property, each potential predecessor of $(i,t)$ must have been selected in an earlier iteration.
Thus, $\ell_{(i,t)}$ represents the minimal cumulative resource consumption before conducting activity~$i$ at time $t$.
If $(n,t)$ is selected and $\ell_{(n,t)} + q_{(n,t)} \leq Q$, then $t$ is the earliest time for which $\ell_{(n,t)} + q_{(n,t)} \leq Q$.
Thus, the solution obtained by backtracking the predecessors is optimal.
If the algorithm terminates without having selected a vertex $(n,t)$ with $\ell_{(n,t)} + q_{(n,t)} \leq Q$, then there is no path $(1,t_1), \ldots, (n,t_n)$  with $(i,t_i)\in V$ for $i\in\range{1}{n}$,
$\theta_i(t_i) \leq t_{i+1}$ for $i\in\range{1}{n-1}$,
and  $\sum_{i=1}^n q_{(i,t_i)} \leq Q$.
\end{proof}

\subsection{Dynamic Discretization Discovery}

The difficulty in solving the TDASP with discretized times is that the number of vertices in the time-expanded network can become very large for small values of $\varepsilon$, i.e. the parameter influencing the granularity of the time discretization.
In order to reduce the number of vertices we now show how to solve the TDASP with discretized times on a partially expanded network similar to the dynamic discretization discovery approach by \cite{VHBS2019} for the time-dependent traveling salesman problem with time windows.
Unlike \cite{VHBS2019}, who assume that arc costs are non-decreasing in time, we assume that resource consumption functions can be non-monotonous.

In the remainder of this section we will refer to a partially expanded network if the set of vertices $V$ is a subset of the vertices of the fully expanded network and
the set of edges is obtained by connecting every vertex $(i,t_i)$ with vertex $(i+1,t_{i+1})$ if and only if $t_{i+1} \geq\theta_i(t_i)$.
Our approach will generate and modify such partially expanded networks with the following properties.

\begin{property}
\label{prop:earliest-latest}
For each $1 \leq i \leq n$ we have $(i,e_i) \in V$ and $(i,l_i) \in V$.
\end{property}

\begin{property}
\label{prop:successor}
For each vertex $(i,t) \in V$ with $e_{i+1} < \varepsilon \left\lceil \frac{\theta_i(t)}{\varepsilon} \right\rceil < l_{i+1}$, we have $\Big(i+1, \varepsilon \left\lceil \frac{\theta_i(t)}{\varepsilon} \right\rceil \Big)\in V$.
\end{property}

\begin{property}
\label{prop:consumption}
Each vertex $(i,t)\in V$ with $t<l_i$ is assigned a consumption value of
\[q_{(i,t)} = \min_{\bar t \in \{ t, t + \varepsilon, \ldots , t' -\varepsilon\}} \rho_i(\bar t)\]
 where $t'>t$ is the smallest value for which $(i,t')\in V$.
Each vertex $(i,l_i)\in V$ is assigned a consumption value of $q_{(i,l_i)} = \rho_i(l_i)$.
\end{property}

Property~\ref{prop:earliest-latest} guarantees that for each activity $i$, a vertex representing the earliest and latest start time is included in the partially expanded network.
For every vertex, its immediate successor is included in the partially expanded network if Property~\ref{prop:successor} is satisfied.
Because of Property~\ref{prop:consumption}, each vertex $(i,t)$ is assigned a value $q_{(i,t)}$, which is a lower bound on the actual resource consumption of the activity for any start time between $t$ and $t'-\varepsilon$, where $t'$ refers to the next vertex of activity~$i$.

With these properties we can make the following observation.

\begin{lemma}
\label{lowerbound}
The optimal solution of a TDASP restricted to a partially expanded network satisfying Properties~\ref{prop:earliest-latest} to~\ref{prop:consumption} is a lower bound on the optimal solution of the fully expanded TDASP with discretized times.
\end{lemma}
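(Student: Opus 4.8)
The plan is to show that the partial optimum cannot exceed the full optimum by exhibiting, for the optimal schedule of the fully expanded network, a feasible schedule of the partially expanded network whose completion time is no larger. Concretely, let $(s_1,\ldots,s_n)$ be an optimal solution of the fully expanded TDASP (if none exists the claim is vacuous, since the full optimum is then $+\infty$ and any value is a lower bound). For each activity $i$ I would define $t_i$ to be the largest time point with $(i,t_i)\in V$ and $t_i \leq s_i$. This is well defined because Property~\ref{prop:earliest-latest} guarantees $(i,e_i)\in V$ and $e_i \leq s_i$, so the set over which the maximum is taken is non-empty, and it immediately yields $e_i \leq t_i \leq s_i \leq l_i$, establishing the domain constraint~(\ref{constr:base:t}).

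Next I would verify that $(t_1,\ldots,t_n)$ satisfies the capacity constraint~(\ref{constr:base:capacity}) in the partial network. Since $t_i$ is the largest vertex of activity $i$ not exceeding $s_i$, if $t_i < l_i$ and $t'$ denotes the next vertex of activity $i$, then $s_i < t'$, so $s_i \in \{t_i, t_i+\varepsilon,\ldots,t'-\varepsilon\}$ and Property~\ref{prop:consumption} gives $q_{(i,t_i)} \leq \rho_i(s_i)$; the boundary case $t_i = l_i$ forces $s_i = l_i$ and gives $q_{(i,l_i)} = \rho_i(l_i) = \rho_i(s_i)$. Summing over all $i$ then shows $\sum_{i=1}^n q_{(i,t_i)} \leq \sum_{i=1}^n \rho_i(s_i) \leq Q$. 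The point here is that rounding each start time \emph{down} to the nearest available vertex can only decrease the assigned consumption, precisely because $q$ is defined as a minimum over the enclosing window.

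The main obstacle is the timing constraint $\theta_i(t_i)\leq t_{i+1}$, and this is exactly where Property~\ref{prop:successor} is needed. Since $t_i \leq s_i$ and $\theta_i$ is non-decreasing by FIFO, we have $\theta_i(t_i)\leq\theta_i(s_i)\leq s_{i+1}$, and as $s_{i+1}$ is a multiple of $\varepsilon$ the rounded value $w := \varepsilon\lceil \theta_i(t_i)/\varepsilon\rceil$ satisfies $\theta_i(t_i)\leq w \leq s_{i+1}$. I would then distinguish three cases. If $w \leq e_{i+1}$, then $\theta_i(t_i)\leq e_{i+1}\leq t_{i+1}$, using that $e_{i+1}$ is an eligible vertex. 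If $e_{i+1} < w < l_{i+1}$, Property~\ref{prop:successor} places $(i+1,w)$ in $V$, and since $w\leq s_{i+1}$ this vertex is eligible for $t_{i+1}$, so $t_{i+1}\geq w\geq\theta_i(t_i)$. Finally, if $w\geq l_{i+1}$, then $w = l_{i+1} = s_{i+1}$ and $(i+1,l_{i+1})\in V$ by Property~\ref{prop:earliest-latest}, again giving $t_{i+1}\geq\theta_i(t_i)$. In each case constraint~(\ref{constr:base:timing}) holds.

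To conclude, since $t_n \leq s_n$ and $\theta_n$ is non-decreasing, the completion time of the constructed schedule satisfies $\theta_n(t_n)\leq\theta_n(s_n)$. Hence $(t_1,\ldots,t_n)$ is a feasible schedule of the partial network whose objective is at most the full optimum, so the optimum over the partial network is a lower bound on the optimum of the fully expanded TDASP. I expect the delicate part to be the timing case analysis, in particular the boundary cases where the rounded successor time $w$ coincides with $e_{i+1}$ or $l_{i+1}$ and thus is supplied by Property~\ref{prop:earliest-latest} rather than Property~\ref{prop:successor}.
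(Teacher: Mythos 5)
Your proof is correct and follows essentially the same route as the paper's: round each optimal start time down to the largest available vertex, use Property~\ref{prop:consumption} to bound the consumption sum, and use Properties~\ref{prop:earliest-latest} and~\ref{prop:successor} together with FIFO to show the rounded vertices still form a feasible path. The only difference is that you spell out the timing case analysis that the paper compresses into the single sentence ``we can find a path,'' which is a welcome addition rather than a deviation.
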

\begin{proof}
Assume that vertices $(1,t_1),$ $\ldots, (n,t_n)$ correspond to an optimal solution of the fully expanded TDASP with discretized times.
Because of Properties~\ref{prop:earliest-latest} and~\ref{prop:successor}, we can find a path $(1,t'_1),$ $\ldots, (n,t'_n)$ in the partially expanded network, where  for each $1\leq i\leq n$ the value $t'_i$ is the largest while not exceeding $t_i$ such that $(i,t'_i)$ is in the partially expanded network.
Because of Property~\ref{prop:consumption}, we have $q_{(i,t'_i)} \leq \rho_i(t_i)$ for each $1\leq i\leq n$
and $\sum_{i=1}^n q_{(i,t'_i)} \leq \sum_{i=1}^n \rho_i(t_i) \leq Q$.
Therefore, $(1,t'_1),$ $\ldots, (n,t'_n)$ is feasible for the TDASP restricted to the partially expanded network and has a completion time that is smaller or equal to the solution  of the fully expanded TDASP with discretized times.
\end{proof}

\begin{lemma}
\label{optimality}
Assume that the path $(1,t_1),$ $\ldots, (n,t_n)$ corresponds to an optimal solution of the TDASP restricted to a partially expanded network satisfying Properties~\ref{prop:earliest-latest} to~\ref{prop:consumption}.
Furthermore, assume that $q_{(i,t_i)} = \rho_i(t_i)$ for each $1\leq i\leq n$.
Then, the path corresponds to an optimal solution of the fully expanded TDASP with discretized times.
\end{lemma}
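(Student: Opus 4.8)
The plan is to establish the claim via a short sandwich argument built directly on Lemma~\ref{lowerbound}. First I would observe that the given path is feasible not only in the partially expanded network but also in the fully expanded one. Since the vertices of the partial network form a subset of the vertices of the full network, each $(i,t_i)$ is a legitimate vertex of the full network, and the edges of both networks are defined by the same rule $t_{i+1}\geq\theta_i(t_i)$; hence the timing constraints~(\ref{constr:base:timing}) carry over unchanged. The only constraint that needs care is the capacity constraint~(\ref{constr:base:capacity}): feasibility in the partial network only guarantees $\sum_{i=1}^n q_{(i,t_i)}\leq Q$, where $q_{(i,t_i)}$ is the lower bound supplied by Property~\ref{prop:consumption}. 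This is precisely where the hypothesis $q_{(i,t_i)}=\rho_i(t_i)$ enters: substituting it gives $\sum_{i=1}^n \rho_i(t_i)\leq Q$, so the path satisfies the genuine capacity constraint of the fully expanded problem with discretized times.

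Having shown feasibility in the full network, I would close the argument by comparing objective values. Let $\theta_n(t_n)$ denote the completion time of the given path, which by assumption equals the optimal value of the TDASP restricted to the partial network. By Lemma~\ref{lowerbound}, this value is a lower bound on the optimal completion time of the fully expanded TDASP with discretized times. Conversely, because the path is feasible for that fully expanded problem, its completion time $\theta_n(t_n)$ is an upper bound on the same optimum. The two inequalities pin the optimal value of the fully expanded problem to exactly $\theta_n(t_n)$, so the path is optimal there.

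I do not anticipate a genuine obstacle: the statement is essentially an optimality certificate obtained by sandwiching the partial-network optimum between the lower bound of Lemma~\ref{lowerbound} and the feasibility of the very same path in the full network. The single point requiring attention—and the reason the hypothesis $q_{(i,t_i)}=\rho_i(t_i)$ is indispensable—is the capacity constraint. Without that equality the path would only be known to respect the relaxed bound $\sum_{i=1}^n q_{(i,t_i)}\leq Q$ rather than the true $\sum_{i=1}^n \rho_i(t_i)\leq Q$, and its feasibility in the fully expanded network could fail, breaking the upper-bound half of the sandwich.
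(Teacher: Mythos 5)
Your proposal is correct and follows essentially the same route as the paper: use the hypothesis $q_{(i,t_i)}=\rho_i(t_i)$ to convert the relaxed capacity bound $\sum_i q_{(i,t_i)}\leq Q$ into the true constraint $\sum_i\rho_i(t_i)\leq Q$, conclude feasibility in the fully expanded network, and then invoke Lemma~\ref{lowerbound} to sandwich the objective value. Your write-up merely makes the upper-bound/lower-bound comparison more explicit than the paper's terser version.
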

\begin{proof}
If  $q_{(i,t_i)} = \rho_i(t_i)$ for each $1\leq i\leq n$, then $\sum_{i=1}^n \rho_i(t_i) = \sum_{i=1}^n q_{(i,t_i)} \leq Q$.
Therefore, $(1,t_1),$ $\ldots, (n,t_n)$ is feasible for the TDASP with discretized times. Because of Lemma~\ref{lowerbound} we can conclude that the path corresponds to an optimal solution of the TDASP with discretized times.
\end{proof}

\begin{algorithm}[htbp]
\KwData{$(e_i,l_i,\tau_i,\theta_i,\rho_i)_{i\in\{1,\ldots,n\}}, Q, \varepsilon$}
\Fn(){${\tt DDD\!::\!initialise}()$}{
 $V \leftarrow \emptyset$\;
 \For{$i\leftarrow n$ \KwTo $1$}{
$\big( V, (q_v)_{v\in V} \big) \leftarrow {\tt DDD\!::\!addRecursive}\big(V, (q_v)_{v\in V},(i,l_i)\big)$\;
  $\big( V, (q_v)_{v\in V} \big) \leftarrow {\tt DDD\!::\!addRecursive}\big(V, (q_v)_{v\in V},(i,e_i)\big)$\;
 }
 \Return $\big( V, (q_v)_{v\in V} \big)$\;
}
\Fn(){${\tt DDD\!::\!addRecursive}\big(V, (q_v)_{v\in V},(i,t)\big)$}{
 \lIf{$(i,t)\in V$}{ \Return $\big( V, (q_v)_{v\in V} \big)$ }
 $V \leftarrow V \cup \{ (i,t) \}$\;
 \uIf{$t = l_i$}{
    $q_{(i,t)} \leftarrow \rho_i(t)$}
 \Else{
  $t' \leftarrow \min\{ t'  \mid (i,t') \in V, t' > t\}$\;
  $q_{(i,t)} \leftarrow \displaystyle\min_{\bar t \in \{ t, t+\varepsilon, \ldots , t' -\varepsilon\}} \rho_i(\bar t)$\;
  \If{$t > e_i$}{
    $t' \leftarrow \max\{ t ' \mid (i,t') \in V, t' < t\}$\;
    $q_{(i,t')} \leftarrow \displaystyle\min_{\bar t \in \{ t', t'+\varepsilon, \ldots , t -\varepsilon\}} \rho_i(\bar t)$\;
  }
 }
 \uIf{$i = n$\label{line:recurse}}{ \Return $\big( V, (q_v)_{v\in V} \big)$ }
 \Else{ \Return ${\tt DDD\!::\!addRecursive}\Big(V, (q_v)_{v\in V},\big(i+1,\max\{e_{i+1},\varepsilon\big\lceil\frac{\theta_i(t)}{\varepsilon}\big\rceil\}\big)\Big)$ }
}

 \caption{Initialising a partial time-expanded network}\label{alg:init}
\end{algorithm}

We can use Algorithm~\ref{alg:init} to create an initial partially expanded network.
According to Property~\ref{prop:earliest-latest}, the algorithm adds vertices corresponding to each activity at the start and end of the time window.
These vertices are added in the order  $(n,l_n), (n,e_n), \ldots, (1,l_1), (1,e_1)$  and their immediate successors are recursively added according to Property~\ref{prop:successor}.
When adding a vertex $(i,t)$ the consumption values of the new vertex and the preceding vertex $(i,t')$ are set as required by Property~\ref{prop:consumption}.

\begin{algorithm}[htb]
\KwData{$(e_i,l_i,\tau_i,\theta_i,\rho_i)_{i\in\{1,\ldots,n\}}, Q, \varepsilon$}
\Fn(){${\tt DDD\!::\!solve}()$}{
 $\big( V, (q_v)_{v\in V} \big) \leftarrow {\tt DDD\!::\!initialise}()$\;
 \Repeat{$\rho_i(t_i) = q_{(i,t_i)}  \forall 1 \leq i \leq n$}{
  $\big\{ (1,t_1), \ldots, (n,t_n) \big\} = {\tt TEN\!::\!solve}\big( V, (q_v)_{v\in V} \big)$\;
  \lIf{ $t_n=\infty$}{\textbf{break}}
  \If{ exists $1  \leq i \leq n$ with $\rho_i(t_i) > q_{(i,t_i)} $} {
   select $(i,t_i)$ with $\rho_i(t_i) > q_{(i,t_i)}$\;
   $t \leftarrow \min\{ t' \mid t' > t_i, (i,t') \in V\}$\;
   \Repeat{$\displaystyle\min_{\bar t \in \{ t_i, t_i+\varepsilon, \ldots , t - \varepsilon\}} \rho_i(\bar t)  > q_{(i,t_i)}$} {
    $t \leftarrow \varepsilon \lceil\frac{t_i+t}{2 \varepsilon}\rceil$\;
   }
   $\big(V, (q_v)_{v\in V}\big) \leftarrow  {\tt DDD\!::\!addRecursive}\big(V, (q_v)_{v\in V}, (i,t)\big)$\;
  }
 }
 \Return $\big\{ (1,t_1), \ldots, (n,t_n) \big\}$\;
}

 \caption{Dynamic discretization discovery algorithm}\label{alg:DDD}
\end{algorithm}

Algorithm~\ref{alg:DDD} solves the TDASP with discretized times by dynamically creating a time-expanded network.
The algorithm starts by initialising the network using Algorithm~\ref{alg:init}.
Then it iterates until a solution of the TDASP with discretized times is found.
Each iteration begins with finding a solution of the TDASP restricted to the partially expanded network using Algorithm~\ref{alg:A*}.
If there is an activity~$i$ in the solution that begins at a time $t$ such that $\rho_i(t) > q_{(i,t)}$, then the total consumption associated with the solution may exceed the capacity.
Therefore, a new vertex is added to the partially expanded network between $(i,t)$ and the next vertex of activity~$i$.
The time for the new vertex is selected in such a way that the value of $q_{(i,t_i)}$ is increased when adding the new vertex.

If a feasible solution exists, the algorithm either terminates with the optimal solution, or adds at least one vertex to the partially expanded network in each iteration.
Therefore, the algorithm is guaranteed to terminate with the optimal solution.
In the worst case, the algorithm may have to fully expand the network.
However, in general, the algorithm can be expected to require only a fraction of the vertices.

\section{Replenishments}
So far we assumed that the resource cannot be used after it has been consumed by the activities.
This is the case, for example, if electric batteries cannot be recharged or if maximum working hours or operating times are reached.
In many cases, however the resource can be regenerated.
For example, electric batteries can be recharged, workers are allowed to work again after taking a sufficiently long break, or machines can be used again after some maintenance has been conducted.
In this section we consider an extension of the TDASP in which the resource can be replenished between the end of an activity and the begin of the next activity.
We assume that the time required to fully replenish a resource after conducting activity $i$ can be determined by a consumption-dependent and non-decreasing replenishment function $\Delta_i(Q_i)$, where $Q_i$ is the cumulative resource consumption after completing activity~$i$.

The resulting \emph{time-dependent activity scheduling problem with replenishments (TDASPR)} is to

minimise
\begin{equation}\label{constr:completion:replenishments}
\theta_n(t_n)
\end{equation}
subject to
\begin{equation}\label{constr:timing:replenishments}
\theta_i(t_i) +  y_{i} \Delta_{i}(Q_i)  \leq t_{i+1}  \forall 1\leq i < n
\end{equation}
\begin{equation}\label{constr:consumption_1:replenishments}
Q_1 = \rho_1(t_1)
\end{equation}
\begin{equation}\label{constr:consumption_i:replenishments}
Q_{i+1} = (1-y_i)Q_i + \rho_{i+1}(t_{i+1})\forall 1 \leq i < n
\end{equation}
\begin{equation}\label{constr:domains:replenishments}
t_{i} \in[e_i,l_i],
Q_i \in [0,Q],
y_{i} \in \{0,1\}  \forall 1 \leq i  \leq n.
\end{equation}
In this problem $Q_i$ is a variable representing the cumulative resource consumption between the last replenishment and completion of activity~$i$, and $y_i$ is a binary variable representing whether the resource is replenished after conducting activity~$i$.
The objective function is the same as in the case without replenishments.
Constraint~(\ref{constr:timing:replenishments}) is analogous to Constraint~(\ref{constr:base:timing}) but includes the replenishment duration if necessary.
Constraints~(\ref{constr:consumption_1:replenishments}) and~(\ref{constr:consumption_i:replenishments}) ensure that the cumulative resource consumption is correctly determined
and
Constraints~(\ref{constr:domains:replenishments}) restricts the domains of the variables.

The special case where replenishments must or must not be taken after some of the activities is covered in Appendix~\ref{app:restrictedreplenishments}.
\subsection{Replenishments in time-expanded networks}

In the following we show how dynamic discretization discovery can be used to solve the TDASPR.
For this, we replace Algorithm~\ref{alg:A*} by Algorithm~\ref{alg:replenishments:A*}
which dynamically adds new vertices corresponding to a fully replenished resource to the partially expanded network.

\begin{algorithm}[p]
\KwData{$(e_i,l_i,\tau_i,\theta_i,\rho_i,\Delta_i)_{i\in\{1,\ldots,n\}}, Q, \varepsilon$}
\Fn(){${\tt TEN}{\hbox{-}}\Delta{\tt ::\!solve}\big(V,\left(q_{(i,t)}\right)_{(i,t)\in V}\big)$}{
$\bar V \leftarrow \emptyset$\;
 \ForEach{$(i,t) \in V$}{
   $\ell_{(i,t)} \leftarrow \left\{\begin{array}{ll}0 & \textrm{ if } i=1\\ \infty & \textrm{ else }\end{array}\right.$\;
 }
 \While{ exists $(i,t)\in V \setminus \bar V$ with $\ell_{(i,t)} + q_{(i,t)} \leq Q$}{
  $(i,t) \leftarrow \displaystyle\argminlex_{(i,t)\in V \setminus \bar V : \ell_{(i,t)} + q_{(i,t)} \leq Q} \Big\{ \big(\theta_i(t) + \sum_{j=i+1}^n \min_{t'\in [e_j,l_j]}\tau_j(t'),\ i, t \big) \Big\}$\;\label{line:selection}
  $\bar V \leftarrow \bar V \cup \{  (i,t) \}$\;
  \lIf { $i = n$ } {\textbf{break}\label{line:replenishments:A*:break} }
  $t^* \leftarrow \varepsilon\big\lceil\frac{\theta_i(t) + \Delta_i(\ell_{(i,t)}+ q_{(i,t)})}{\varepsilon}\big\rceil$\;\label{line:earliestreplenishmenttime}
  \If{ $t^* \leq l_{i+1}$ 
  }{
     ${\tt DDD\!::\!addRecursive}\big(V, (q_v)_{v\in V},(i+1,t^*))$\label{line:earliestreplenishment}
  }
  \ForEach{$t' \in \{t'  \mid (i+1,t') \in V\setminus \bar V,  \theta_i(t) \leq t' < t^* \}$\label{line:noreplenishmentloop}}{
   \If{ $\ell_{(i,t)} + q_{(i,t)} < \ell_{(i+1,t')}$  } {
    $\ell_{(i+1,t')} \leftarrow \ell_{(i,t)} + q_{(i,t)}$\;
    $p_{(i+1,t')} \leftarrow (i,t)$
   }
  }\label{line:noreplenishmentloopend}
  \ForEach{$t' \in \{t'  \mid (i+1,t') \in V\setminus \bar V,  t' \geq t^* \}$\label{line:replenishmentloop}}{
   $\ell_{(i+1,t')} \leftarrow 0$\;
   $p_{(i+1,t')} \leftarrow (i,t)$
  }
  }
 \Return ${\tt solution}\big(\bar V,\left(q_v,\ell_{v},p_v\right)_{v\in \bar V}\big)$\;
}
 \caption{Solving the TDASPR in a partially expanded network}\label{alg:replenishments:A*}
\end{algorithm}

Whenever a vertex $(i,t)$ with $i<n$ is selected in Algorithm~\ref{alg:replenishments:A*},
the algorithm determines the earliest replenishment time $t^*$ after conducting activity~$i$ at time~$t$.
If $t^*\leq l_{i+1}$, the vertex $(i+1,t^*)$ is added to $V$ using algorithm ${\tt DDD\!::\!addRecursive}$.
Then, all labels for vertices with a time prior to $t^*$ are updated if their value can be reduced and
all labels  for vertices with a later time are set to zero.

\begin{lemma}
\label{replenishments:lowerbound}
Assume that Algorithm~\ref{alg:replenishments:A*} is applied to a partially expanded network satisfying Properties~\ref{prop:earliest-latest} to~\ref{prop:consumption}.
Then, the solution of the algorithm is a lower bound on the solution in the fully expanded network.
\end{lemma}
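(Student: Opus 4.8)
The plan is to mirror the proof of Lemma~\ref{lowerbound}, but to track the extra replenishment dynamics. I would begin with an optimal solution of the fully expanded TDASPR, given by start times $t_1,\ldots,t_n$, replenishment decisions $y_1,\ldots,y_{n-1}$, and cumulative consumptions $Q_1,\ldots,Q_n$ satisfying constraints~(\ref{constr:timing:replenishments}) to~(\ref{constr:domains:replenishments}). (If the fully expanded problem has no feasible solution the claimed lower bound is vacuous, so I assume one exists.) For each $1\leq i\leq n$ let $t_i'$ denote the largest time not exceeding $t_i$ for which $(i,t_i')\in V$; exactly as in Lemma~\ref{lowerbound}, Properties~\ref{prop:earliest-latest} and~\ref{prop:successor} guarantee that the vertices $(1,t_1'),\ldots,(n,t_n')$ exist and that $\theta_i(t_i')\leq t_{i+1}'$ for each $i$, so they form a path in the partially expanded network.

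The core of the argument is an induction on $i$ showing that, with the labels produced by Algorithm~\ref{alg:replenishments:A*}, we have $\ell_{(i,t_i')}+q_{(i,t_i')}\leq Q_i$. The base case $i=1$ is immediate, since $\ell_{(1,t_1')}=0$ and $q_{(1,t_1')}\leq\rho_1(t_1)=Q_1$ by Property~\ref{prop:consumption}. For the inductive step I would first observe, exactly as in Lemma~\ref{lemma:A*}, that because every potential predecessor is selected before a vertex and because $\ell_{(i,t_i')}+q_{(i,t_i')}\leq Q_i\leq Q$, the vertex $(i,t_i')$ is selected and its outgoing updates are applied before $(i+1,t_{i+1}')$ is selected. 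The essential difficulty is that the algorithm's replenishment decision is \emph{forced} by comparing the successor time with the earliest replenishment time $t^*$ computed in line~\ref{line:earliestreplenishmenttime}, rather than matching the optimal $y_i$, so I have to show the induction hypothesis survives whatever that comparison dictates.

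To control $t^*$ I would combine three facts: the FIFO property gives $\theta_i(t_i')\leq\theta_i(t_i)$ since $t_i'\leq t_i$; monotonicity of $\Delta_i$ together with the induction hypothesis gives $\Delta_i(\ell_{(i,t_i')}+q_{(i,t_i')})\leq\Delta_i(Q_i)$; and $t_{i+1}$ is a multiple of $\varepsilon$. When $y_i=1$, constraint~(\ref{constr:timing:replenishments}) gives $\theta_i(t_i)+\Delta_i(Q_i)\leq t_{i+1}$, hence $t^*\leq t_{i+1}\leq l_{i+1}$, so $(i+1,t^*)$ is added and, since $t_{i+1}'\geq t^*$, the replenishment loop (line~\ref{line:replenishmentloop}) sets $\ell_{(i+1,t_{i+1}')}=0$, whence $\ell_{(i+1,t_{i+1}')}+q_{(i+1,t_{i+1}')}\leq\rho_{i+1}(t_{i+1})=Q_{i+1}$. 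When $y_i=0$ there are two sub-cases. If $t_{i+1}'<t^*$, the no-replenishment loop (line~\ref{line:noreplenishmentloop}) gives $\ell_{(i+1,t_{i+1}')}\leq\ell_{(i,t_i')}+q_{(i,t_i')}\leq Q_i$, so $\ell_{(i+1,t_{i+1}')}+q_{(i+1,t_{i+1}')}\leq Q_i+\rho_{i+1}(t_{i+1})=Q_{i+1}$. If instead $t_{i+1}'\geq t^*$, the label is reset to $0$ (only helping), and $t^*\geq\theta_i(t_i')+\Delta_i(\cdot)$ certifies that the implied replenishment is time-feasible, giving again $q_{(i+1,t_{i+1}')}\leq\rho_{i+1}(t_{i+1})\leq Q_{i+1}$. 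In every case the hypothesis is preserved.

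Having established the induction, the path $(1,t_1'),\ldots,(n,t_n')$ is feasible in the partially expanded network and reaches $(n,t_n')$ with $\ell_{(n,t_n')}+q_{(n,t_n')}\leq Q_n\leq Q$. Since the algorithm returns the earliest completion-time vertex of activity~$n$ satisfying the capacity bound, its solution has completion time at most $\theta_n(t_n')\leq\theta_n(t_n)$ by FIFO, which equals the optimal value in the fully expanded network; this is the required lower bound. I expect the main obstacle to be precisely the timing argument of the inductive step: verifying that the earliest replenishment time $t^*$ derived from the partial-network labels never exceeds the replenishment completion time of the optimal fully expanded schedule, while simultaneously checking that the algorithm's forced replenishment choices, which may differ from the optimal $y_i$, never destroy feasibility or the consumption bound $\ell_{(i,t_i')}+q_{(i,t_i')}\leq Q_i$.
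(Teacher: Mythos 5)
Your proof is correct and follows essentially the same route as the paper's: both map the optimal fully-expanded solution to the rounded-down path $(1,t'_1),\ldots,(n,t'_n)$, invoke Properties~\ref{prop:earliest-latest} to~\ref{prop:consumption} for connectivity and the consumption bounds, and use FIFO together with monotonicity of $\Delta_i$ to show that the label of the rounded-down successor is reset to zero no later than the optimal schedule replenishes. The only differences are presentational: the paper argues block-by-block between consecutive replenishments and explicitly works with the variant of Algorithm~\ref{alg:replenishments:A*} without the early break in line~\ref{line:replenishments:A*:break} (so that every $(i,t'_i)$ is guaranteed to be permanently labelled before the comparison with $\bar t_n$ is made), whereas you run a vertex-by-vertex induction with a case split on $y_i$.
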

\begin{proof}
Let us first note that the solution returned by Algorithm~\ref{alg:replenishments:A*} is the same as the solution returned by Algorithm~\ref{alg:replenishments:A*} without Line~\ref{line:replenishments:A*:break}.
For the ease of argumentation, we therefore base our proof on a variant of Algorithm~\ref{alg:replenishments:A*} without Line~\ref{line:replenishments:A*:break}.
Let $(1,t_1),$ $\ldots, (n,t_n)$ denote an optimal solution in the fully expanded network.
Let $\bar V$ denote the set of permanently labelled vertices after execution of this variant of the algorithm, including the vertices that may have been added to $V$ during the course of the algorithm.
Let $(1,\bar t_1),$ $\ldots, (n,\bar t_n) \in \bar V$ denote the vertices returned by the algorithm.
Analogously to Lemma~\ref{lemma:A*} we can show that $\bar t_n$ is smaller or equal to
the completion time of any other path through vertices in $\bar V$ not exceeding the capacity between replenishments.

Let $(1,t'_1),$ $\ldots, (n,t_n')$ denote the vertices in $\bar V$ for which $t'_i$ is the largest time not exceeding~$t_i$ for each $i\in\range{1}{n}$.
These vertices can be found because of Property~\ref{prop:earliest-latest}.
As  $\theta_i$ is non-decreasing for each $i\in\range{1}{n}$, Property~\ref{prop:successor} guarantees that
$\theta_i(t'_i) \leq t'_{i+1}$ for $i\in\range{1}{n-1}$.

Assume, that in the optimal solution $(1,t_1),$ $\ldots, (n,t_n)$, the first replenishment is conducted just before starting activity $j$.
Because of Property~\ref{prop:consumption}, we have $\sum_{i=1}^{j-1} q_{(i,t'_i)} \leq \sum_{i=1}^{j-1} \rho_i(t_i) \leq Q$.
Because $\Delta_{j-1}$ is non-decreasing,
we have $\theta_{j-1}(t'_{j-1}) + \Delta_{j-1}\big(\sum_{i=1}^{j-1} q_{(i,t'_i)} \big) \leq \theta_{j-1}(t_{j-1}) + \Delta_{j-1}\big(\sum_{i=1}^{j-1} \rho_i(t_i) \big)$.
Therefore, there is a vertex $(j,t)\in \bar V$ with $t\leq t_j$ and $\ell_{(j,t)} = 0$.
By definition we have $t \leq t'_j \leq t_j$ and we can conclude that $\ell_{(j,t'_j)} = 0$.
Analogously we can show that $\sum_{i=j}^{k-1} q_{(i,t'_i)} \leq \sum_{i=j}^{k-1} \rho_i(t_i) \leq Q$ and $\ell_{(k,t'_k)} = 0$ if the next replenishment in the solution $(1,t_1),$ $\ldots, (n,t_n)$ is conducted just before starting activity $k$.
We can apply this argument repeatedly to conclude that the capacity is not exceeded along the path $(1,t'_1),$ $\ldots, (n,t'_n)$
and that $\ell_{(n,t'_n)} + q_{(n,t'_n)} \leq Q$.
Thus, $\bar t_n \leq t'_n \leq t_n$.
\end{proof}

\begin{lemma}
\label{replenishments:optimality}
Assume that the path $(1,t_1),$ $\ldots, (n,t_n)$ corresponds to a solution created by Algorithm~\ref{alg:DDD} using Algorithm~\ref{alg:replenishments:A*}.
Furthermore, assume that $q_{(i,t_i)} = \rho_i(t_i)$ for each $1\leq i\leq n$.
Then, the path corresponds to an optimal solution of the fully expanded TDASPR with discretized times.
\end{lemma}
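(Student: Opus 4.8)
The plan is to follow the proof of Lemma~\ref{optimality}, replacing Lemma~\ref{lowerbound} by Lemma~\ref{replenishments:lowerbound}. That lemma already guarantees that the completion time returned by Algorithm~\ref{alg:replenishments:A*} on the partially expanded network does not exceed the optimal completion time in the fully expanded network, so it suffices to show that, under the additional assumption $q_{(i,t_i)}=\rho_i(t_i)$, the path $(1,t_1),\ldots,(n,t_n)$ together with the replenishment decisions implied by the backtracking is feasible for the fully expanded TDASPR with discretized times. A feasible solution whose completion time coincides with a lower bound on the optimum is optimal, which would complete the argument.

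First I would read off the replenishment decisions encoded in the path. Between consecutive path vertices $(i,t_i)$ and $(i+1,t_{i+1})$, Algorithm~\ref{alg:replenishments:A*} computes the earliest replenishment time $t^*=\varepsilon\big\lceil(\theta_i(t_i)+\Delta_i(\ell_{(i,t_i)}+q_{(i,t_i)}))/\varepsilon\big\rceil$ in Line~\ref{line:earliestreplenishmenttime}; the successor label is set to $0$ precisely when $t_{i+1}\geq t^*$ (the loop at Line~\ref{line:replenishmentloop}) and is carried over as $\ell_{(i,t_i)}+q_{(i,t_i)}$ when $\theta_i(t_i)\leq t_{i+1}<t^*$ (the loop of Lines~\ref{line:noreplenishmentloop} to~\ref{line:noreplenishmentloopend}). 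Accordingly I would set $y_i=1$ in the first case and $y_i=0$ in the second, and define the variables $Q_i$ by Constraints~(\ref{constr:consumption_1:replenishments}) and~(\ref{constr:consumption_i:replenishments}).

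The key step is to verify that under $q_{(i,t_i)}=\rho_i(t_i)$ these variables coincide with the labels maintained by the algorithm, i.e.\ $Q_i=\ell_{(i,t_i)}+q_{(i,t_i)}$ for every $i$ on the path. Within any maximal block of indices between two consecutive replenishments the labels accumulate the values $q_{(i,t_i)}=\rho_i(t_i)$ and are reset to $0$ exactly where $y_i=1$, which is precisely the recursion defining $Q_i$; the identification then follows by induction along the path. Given it, feasibility is immediate: the selection and update conditions of Algorithm~\ref{alg:replenishments:A*} enforce $\ell_{(i,t_i)}+q_{(i,t_i)}\leq Q$, hence $Q_i\leq Q$; whenever $y_i=1$ we have $t_{i+1}\geq t^*\geq\theta_i(t_i)+\Delta_i(\ell_{(i,t_i)}+q_{(i,t_i)})=\theta_i(t_i)+\Delta_i(Q_i)$, and whenever $y_i=0$ we have $t_{i+1}\geq\theta_i(t_i)$, so Constraint~(\ref{constr:timing:replenishments}) holds in both cases; Constraint~(\ref{constr:base:t}) holds because the $t_i$ are vertices of the network.

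The main obstacle I anticipate is exactly this identification $Q_i=\ell_{(i,t_i)}+q_{(i,t_i)}$: it is the hypothesis $q=\rho$ along the path that forces the replenishment time $t^*$ computed by the algorithm to agree with the true value $\theta_i(t_i)+\Delta_i(Q_i)$. Without that hypothesis $\Delta_i$ would be evaluated at an underestimate of the genuine cumulative consumption, so the timing constraint could be violated in the full problem and the path need not be feasible. Once feasibility is secured, Lemma~\ref{replenishments:lowerbound} closes the proof in the same way Lemma~\ref{lowerbound} does in Lemma~\ref{optimality}.
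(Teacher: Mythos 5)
Your proposal is correct and follows essentially the same route as the paper's proof: both establish that, under $q_{(i,t_i)}=\rho_i(t_i)$, the labels along the backtracked path coincide with the true cumulative consumptions between replenishments, so the capacity is respected and the replenishment durations computed by the algorithm are sufficient, making the path feasible in the fully expanded network; optimality then follows from Lemma~\ref{replenishments:lowerbound}. Your version is merely more explicit in writing out the $y_i$ and $Q_i$ variables and checking each constraint, where the paper compresses this into the observation that $\ell_{(j,t_j)}+q_{(j,t_j)}=\sum_{k=i}^{j}\rho_k(t_k)\leq Q$ within each block starting at a zero label.
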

\begin{proof}
For each $1\leq i\leq n$ let $\ell_{(i,t_i)}$ denote the corresponding label created by the algorithm.
Let~$i$ be any activity index with $\ell_{(i,t_i)}=0$.
For any activity index $j > i$ such that no replenishment is conducted between $i$ and $j$, we have $Q \geq \ell_{(j,t_j)} + q_{(j,t_j)}= \sum_{k=i}^{j} q_{(k,t_k)} = \sum_{k=i}^{j} \rho_k(t_k)$.
Therefore, the duration of any replenishment is sufficiently long
and the capacity is never exceeded between replenishments.
Thus, $(1,t_1),$ $\ldots, (n,t_n)$ is a feasible path in the fully expanded network. Because of Lemma~\ref{replenishments:lowerbound} we can conclude that the path corresponds to an optimal solution of the TDASP with discretized times.
\end{proof}

\section{Preloading of vertices}

Assume we have already solved the problem for a sequence of activities $(1, \ldots, n)$ and we want to solve the problem for a sequence $(1, \ldots, n, n+1, \ldots, m)$.
It is likely that some of the vertices that are generated when scheduling activities $(1, \ldots, n)$ are also generated when scheduling activities $(1, \ldots, n, n+1, \ldots, m)$.
We can reduce the computational effort associated to finding such vertices again by preloading them before starting the solution process for sequence $(1, \ldots, n, n+1, \ldots, m)$.
On the other hand, preloading all of the vertices generated when determining a schedule for activities $(1, \ldots, n)$ may create an unnecessary computational overhead, because only a share of the vertices can be expected to be required to find a solution for activities $(1, \ldots, n, n+1, \ldots, m)$.

To balance the tradeoff between salvaging previous computational effort and avoiding unnecessary overhead by adding too many vertices, we propose
to preload all vertices of the solution path for sequence $(1, \ldots, n)$ to our partially expanded network after initialization and before starting the solution process.
In order to ensure that $q_{(i,t)} = \rho_i(t)$ for all vertices preloaded, we also add the immediate successors.
As all vertices are added using the function ${\tt DDD\!::\!addRecursive}$, Properties~\ref{prop:earliest-latest} to~\ref{prop:consumption}  maintain satisfied.
Consequently, the optimality conditions for all vertices on the previous solution path are satisfied.

The benefit of adding these vertices after initialization of the partially expanded network is that the total number of vertices generated by the dynamic discretization discovery may be drastically reduced for those cases where the solution for sequence  $(1, \ldots, n, n+1, \ldots, m)$ overlaps with the solution for sequence  $(1, \ldots, n)$.
In some cases, however, preloading may cause a small overhead because vertices may be recursively added that are not needed for the solution process.

\section{Evaluation}

This paper is motivated by an industry project on last-mile deliveries with electric vehicles, where a homogeneous fleet of electric delivery vehicles is used to deliver parcels within metropolitan areas, i.e., regions consisting of densely populated city centres and less-populated surrounding territories.
Traffic conditions depend both on the particular area as well as the time of the day with morning and afternoon peak hours.
With increasing distance to the city centres, the impact of peak hours on travel times decreases.
Thus, the time-dependency of travel times is not homogeneous and delays caused by congestion can be avoided by circumventing city centres during peak hours.
Current electric vehicles have minimal consumption per unit distance at moderate speeds while the consumption increases towards very low or high velocities.
Therefore, the energy consumption of the vehicles is related to travel speeds, which in turn depend on the congestion level.
All vehicles leave the depot fully charged and have to return to the depot before running out of energy.
They can recharge the battery at the depot as well as at public charging infrastructure.

Due to confidentiality reasons, we are not allowed to evaluate our approaches on instances obtained from data of our project partners.
Therefore, we generate artificial instances for the time-dependent vehicle routing problem with replenishments having the characteristics described above.
Our instances are based on the well-known instances for the vehicle routing problem with time windows by \cite{Solomon1987}.
These instances have customer locations distributed in the Euclidean plane
and are subdivided into instance sets with clustered customer locations (C1 and C2), randomly distributed customer locations (R1 and R2), and mixed locations (RC1 and RC2).

\begin{figure}[htb]
      \centering
        \includegraphics[height=4cm]{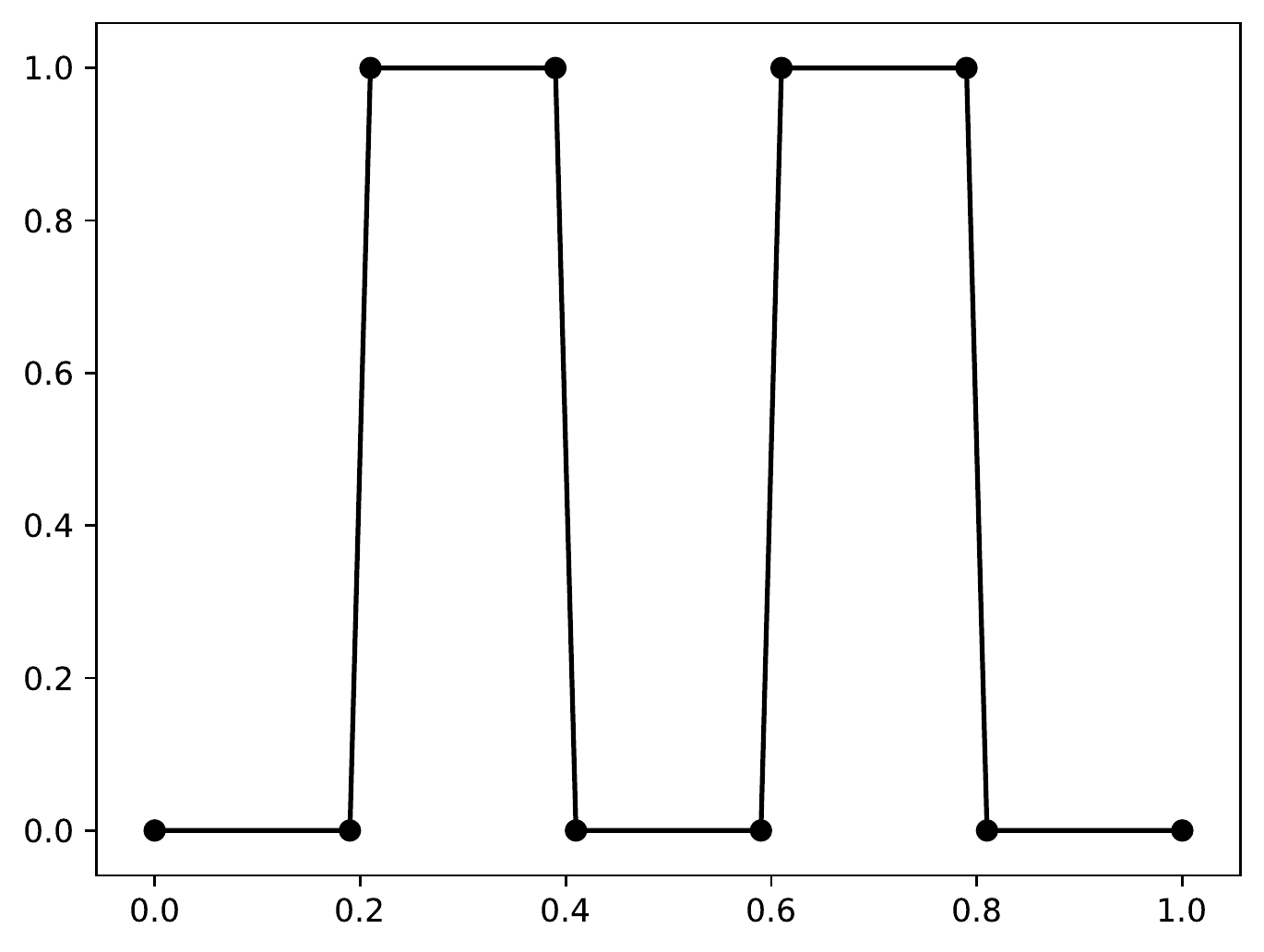}
\caption{Time-dependent congestion $\delta(t)$}\label{fig:congestionfactors}
\end{figure}
In order to include time-dependent travel times,  we introduce a time-dependent congestion factor $\delta(t) \in [0,1]$ for each point in time as shown in Figure~\ref{fig:congestionfactors}.
The time values on the horizontal axis in the figure are given relative to the duration of the time horizon which differs throughout the instances.
Low levels of $\delta(t)$ indicate free-flow traffic conditions during off-peak hours.
High values of $\delta(t)$ indicate high congestion levels during morning and afternoon peak-hours.

\begin{figure}[htbp]
      \centering
      \begin{subfigure}[b]{0.4\textwidth}
        \includegraphics[width=5.5cm]{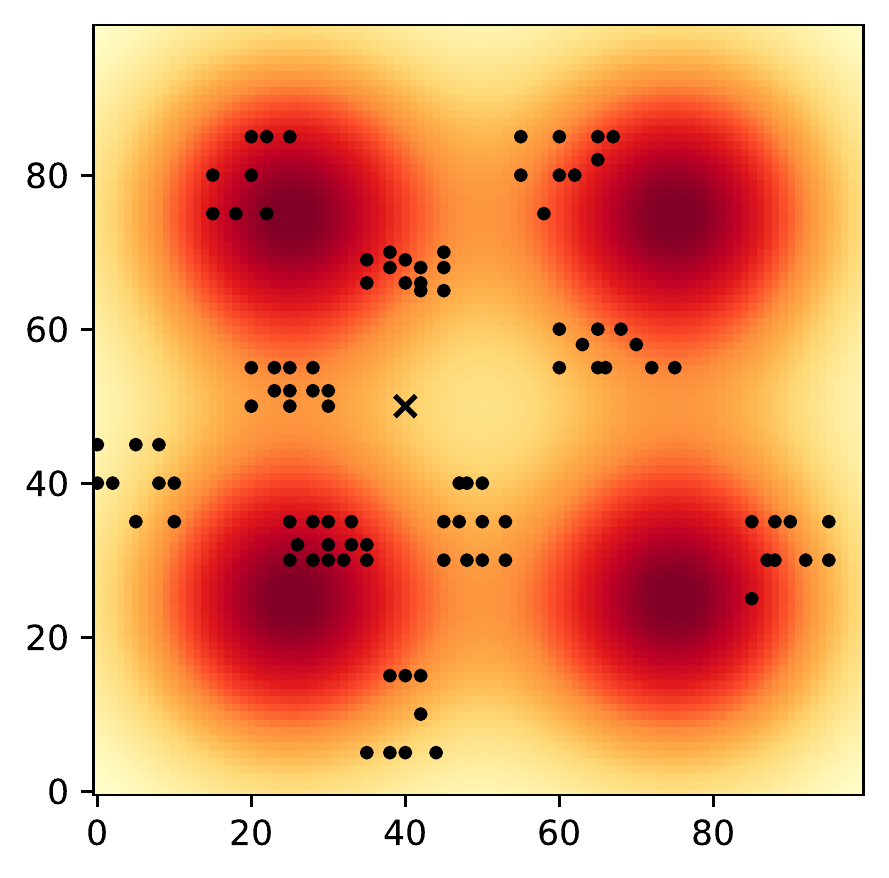}
        \subcaption{C1 instances}
      \end{subfigure}
      \
      \begin{subfigure}[b]{0.4\textwidth}
        \includegraphics[width=5.5cm]{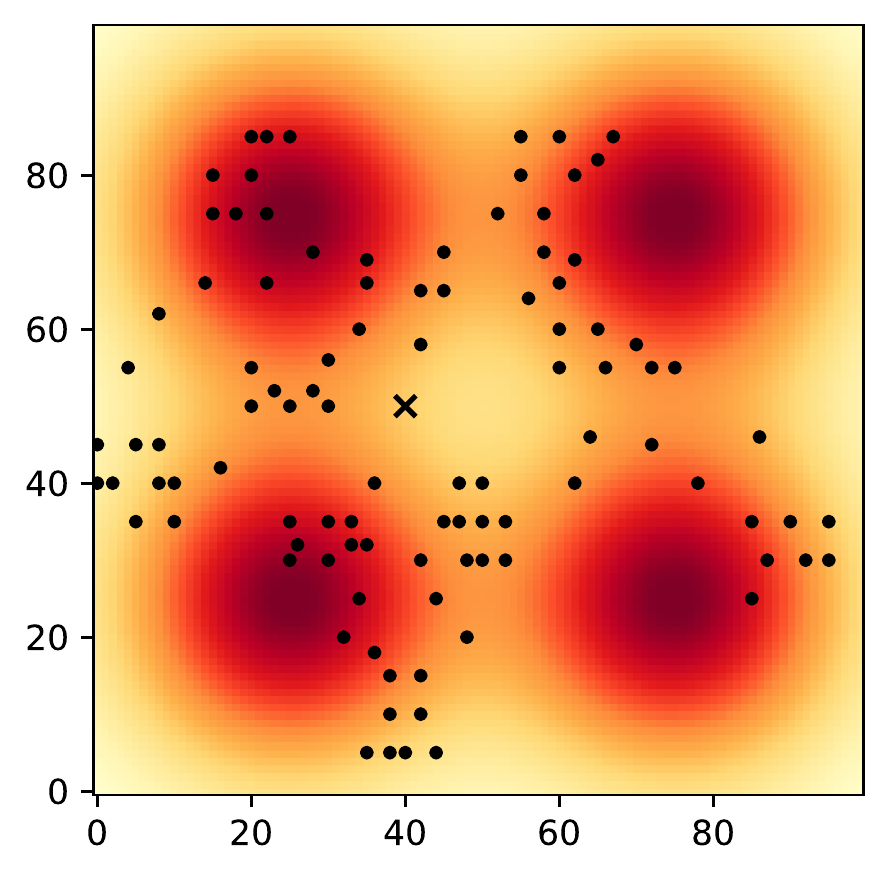}
        \subcaption{C2 instances}
      \end{subfigure}
      \\
      \begin{subfigure}[b]{0.4\textwidth}
        \includegraphics[width=5.5cm]{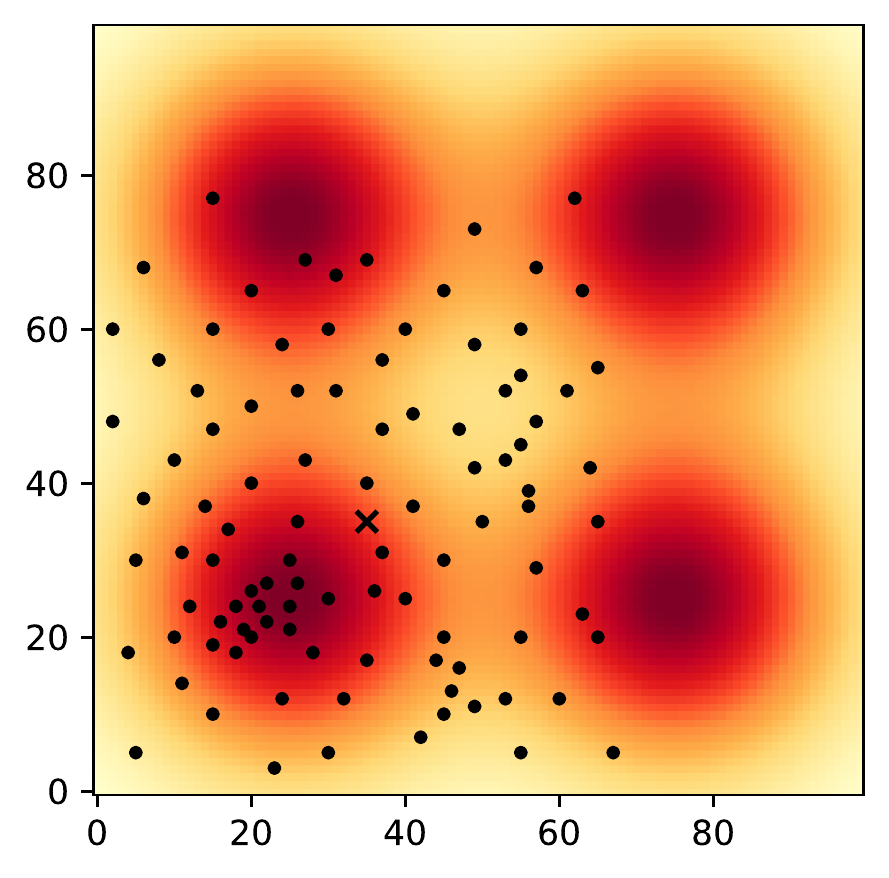}
        \subcaption{R1 \& R2 instances}
      \end{subfigure}
      \
      \begin{subfigure}[b]{0.4\textwidth}
        \includegraphics[width=5.5cm]{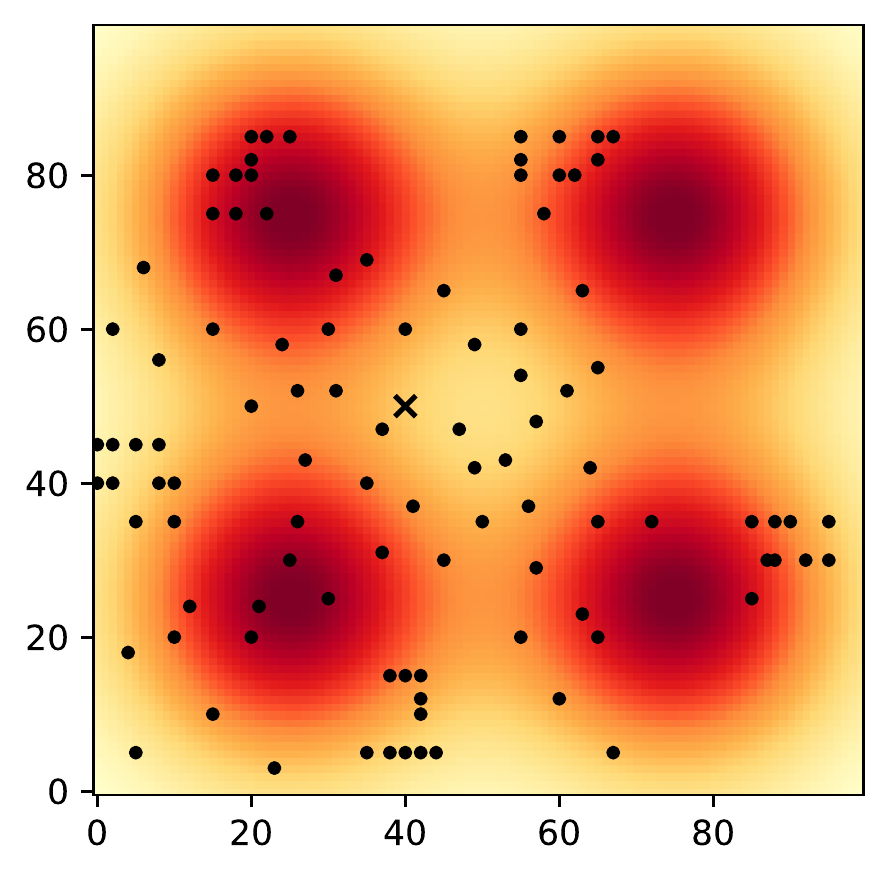}
        \subcaption{RC1 \& RC2 instances}
      \end{subfigure}
\caption{Location-dependent congestion $\gamma(x,y)$ and customer locations.
}\label{fig:customerlocations}
\end{figure}
We assume a city centre in each quadrant of the Euclidean plane where congestion is maximal.
In order to consider that the level of congestion decreases with increasing distance to the city centres,
 we calculate a location-dependent congestion factor $\gamma(x,y) \in [0,1]$ by the Gaussian function based on the distance to the city centres.
Figure~\ref{fig:customerlocations} shows the location-dependent congestion factors, the customer locations (shown as $\bullet$), and the location of the depot (shown as $\times$) for the different subsets of instances proposed by \cite{Solomon1987}.
Low values of $\gamma(x,y)$ indicate low sensitivity to congestion and are illustrated in light color.
High values of $\gamma(x,y)$ indicate high sensitivity to congestion and are illustrated in dark color.
We can see that all instances comprise customers located in areas prone to congestion and others in areas where congestion is negligible.
Even if two customers are located in areas without significant congestion, the direct path from one to the other may traverse congested areas during peak-hours.

For each integer valued point $(x,y)$, we calculate the time required to travel a unit distance at a time $t$ by
\begin{equation}\label{eq:td_travel_time}
\tau_{(x,y)} (t) = \frac{\tau_{(x,y)}^\mathrm{free}}{1 - \min\big\{0.8, \gamma(x,y)\big\} \cdot \delta(t) },
\end{equation}
where $\tau_{(x,y)}^\mathrm{free}$ is the congestion-agnostic time necessary to travel a unit distance at point ${(x,y)}$.
The factor $\min\big\{0.8, \gamma(x,y)\big\}$ in (\ref{eq:td_travel_time}) is used to generate homogeneous travel times in an area around the city centres.
The maximum travel time close to a city centre is five times the free flow travel time.

\begin{figure}[htbp]
      \centering
      \begin{subfigure}[b]{0.32\textwidth}
        \includegraphics[width=5cm]{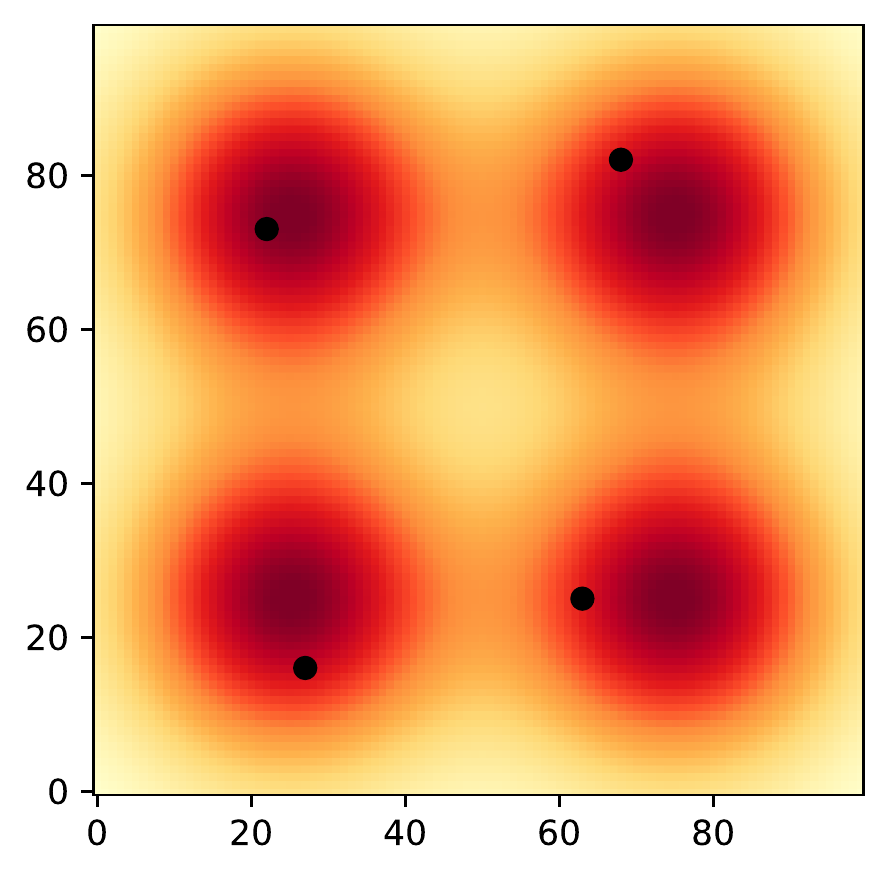}
        \subcaption{One station per city}
      \end{subfigure}
      \begin{subfigure}[b]{0.32\textwidth}
        \includegraphics[width=5cm]{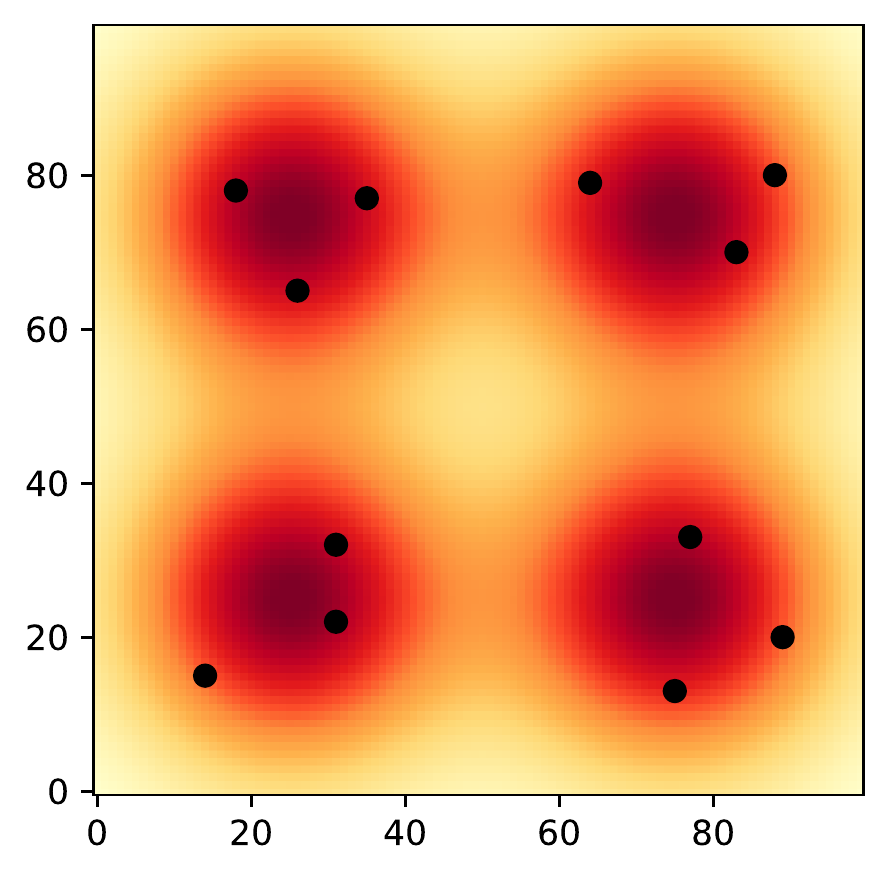}
        \subcaption{Three stations per city}
      \end{subfigure}
      \begin{subfigure}[b]{0.32\textwidth}
        \includegraphics[width=5cm]{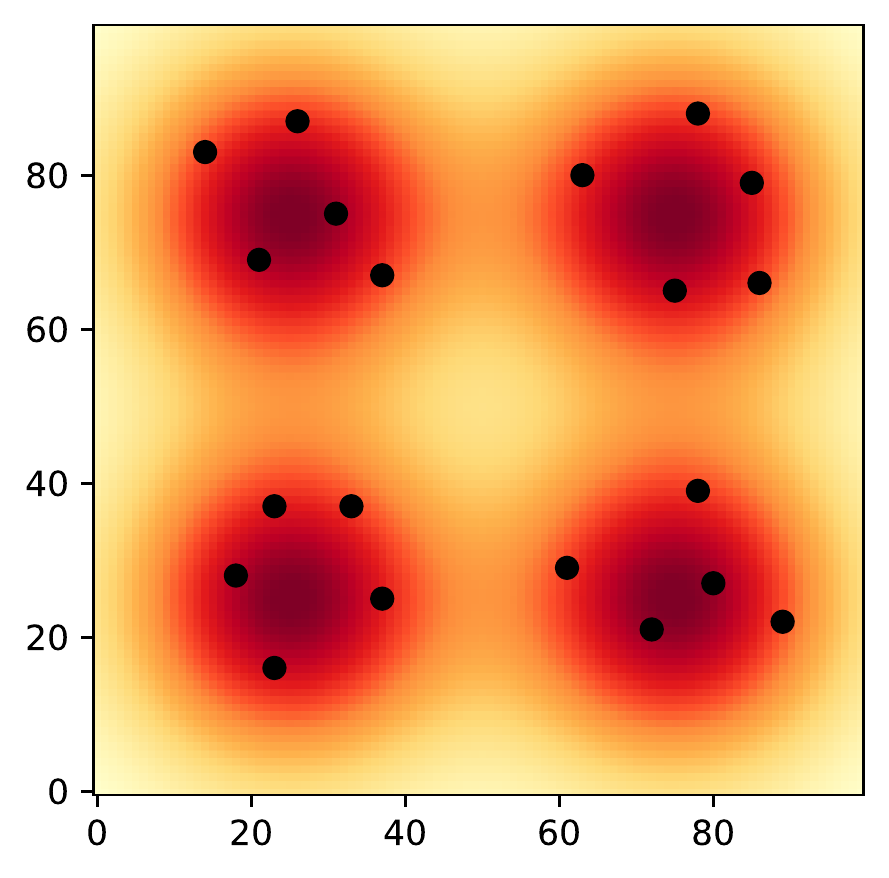}
        \subcaption{Five stations per city}
      \end{subfigure}
\caption{Location of the charging stations}\label{fig:chargingstations}
\end{figure}
For our experiments, we derived five sets of instances from the original instances.
In the first set of instances, vehicles cannot recharge the battery during the route.
In the second set of instances, vehicles can only recharge the battery at the depot.
In the other set of instances, vehicles can recharge the battery at the depot and at public charging infrastructure close to one of the four city centres.
The number of charging stations available ranges from one per city to five per city and their locations are illustrated in Figure~\ref{fig:chargingstations}.

In order to determine time-dependent travel time and consumption functions for any pair of locations, we determine a time-dependent shortest path for different start times distributed over the planning horizon, assuming that the vehicle can only move one unit left, right, up, or down, and $\sqrt{2}$ units in diagonal direction.
A time-dependent travel time function is then determined by fitting a piecewise linear function to the duration of the time-dependent shortest paths.
For each of the time-dependent shortest paths, we determine the energy consumed along the path by accumulating the speed-dependent energy consumption \citep{Galvin2017} for each point traversed along the path.
Then, the time-dependent consumption function is obtained by fitting a piecewise linear function to the energy consumption of the time-dependent shortest paths.
The details of the instance generator including the source code can be found online at \url{https://github.com/SteffenPottel/td_vrptw_instancegenerator}.

The instances are heuristically solved using an adaptation of the savings algorithm \citep{CW1964},
where routes are evaluated using the approaches presented in this paper.
Since vehicles and the corresponding driver are the main cost-components in last-mile delivery operations, we optimize over the total number of vehicles first and, as a second criterion, over the total completion time.
The latter includes, in particular, the time to return to the depot.
In order to allow recharging of the battery at the depot or at public charging stations we allow vehicles to take a detour via a charging station whenever the increase in the travel distance does not exceed  $\sqrt{2}$ times the direct distance between delivery locations.
If multiple charging stations satisfy this criterion, then the charging station with the smallest detour is chosen.
The delivery companies partnering in the project expressed that they want to avoid short and possibly avoidable stops at charging stations.
By setting the duration of the replenishment to a constant value we penalize recharging when the  energy level of the battery is still high.

We implemented our algorithms in C++ and ran the experiments on a single core of an Intel Core i7-7700HQ CPU with 2.80GHz.
For comparison purposes we ran experiments using CPLEX for solving for the mixed integer program provided in Appendix~\ref{app:MIP}.

Tables~\ref{tab:type1Instances} and~\ref{tab:type2Instances} provide an overview over the results of our experiments.
Detailed results of our experiments can be found in Appendix~\ref{sec:detailed_results}.
The first column in the tables indicates the solver used for route evaluations.
The dynamic discretization discovery approach is denoted by DDD, the dynamic discretization discovery approach with preloading of vertices is denoted by DDD-PL, and the approach using the mixed integer programming solver is denoted by MIP.
The second column lists the number of charging stations available.
The third and fourth column show the average computation time (in seconds) until termination of the savings algorithm, and the computation time (in seconds) per route evaluation.
The fifth and six column report the average number of vehicles and the average total completion time.
The next two columns indicate the number of routes in which the vehicle recharges the battery at a service station (or the depot) and the total number of replenishments.
The last column indicates the number of instances for which the savings algorithm terminated within the run time limit of 7200 seconds.
\begin{table}[htbp]
\centering
\scalebox{.67}{
\begin{tabular}{ccccccccc}
\hline
Solver & Charging & Avg. CPU & CPU per Evaluation & Avg. Veh. & Avg. Compl. & Replenishments & Routes w. Repl. & Terminated \\
\hline
DDD & none & 82.4 & 0.000182 & 18.0 & 5949.4 & 0 & 0 & 29 \\
 DDD & Depot & 93.9 & 0.000161 & 18.0 & 5948.2 & 6 & 6 & 29 \\
 DDD & Depot + 1 per city & 134.8 & 0.000203 & 18.1 & 5958.3 & 9 & 9 & 29 \\
 DDD & Depot + 3 per city & 135.3 & 0.000186 & 18.0 & 5954.8 & 10 & 10 & 29 \\
 DDD & Depot + 5 per city & 146.9 & 0.000198 & 18.0 & 5960.5 & 10 & 10 & 29 \\
 DDD-PL & none & 72.6 & 0.000162 & 18.0 & 5949.4 & 0 & 0 & 29 \\
 DDD-PL & Depot & 75.8 & 0.000128 & 18.0 & 5948.2 & 6 & 6 & 29 \\
 DDD-PL & Depot + 1 per city & 84.0 & 0.000127 & 18.1 & 5958.3 & 9 & 9 & 29 \\
 DDD-PL & Depot + 3 per city & 81.1 & 0.000113 & 18.0 & 5954.8 & 10 & 10 & 29 \\
 DDD-PL & Depot + 5 per city & 88.4 & 0.000121 & 18.0 & 5960.5 & 10 & 10 & 29 \\
 MIP & none & 2508.0 & 0.005232 & 18.1 & 5950.6 & 0 & 0 & 29 \\
 MIP & Depot & 3299.9 & 0.005505 & 18.4 & 5752.5 & 4 & 4 & 28 \\
 MIP & Depot + 1 per city & 3621.5 & 0.005620 & 18.4 & 5753.5 & 7 & 7 & 28 \\
 MIP & Depot + 3 per city & 3345.9 & 0.005407 & 19.3 & 5622.5 & 5 & 5 & 23 \\
 MIP & Depot + 5 per city & 3282.6 & 0.005342 & 19.5 & 5753.8 & 6 & 6 & 22 \\
 \hline
\end{tabular}
}
\caption{Results averaged over 29 instances of type C1, R1, and RC1.}\label{tab:type1Instances}
\end{table}

\begin{table}[htbp]
\centering
\scalebox{.67}{
\begin{tabular}{ccccccccc}
\hline
Solver & Charging & Avg. CPU & CPU per Evaluation & Avg. Veh. & Avg. Compl. & Replenishments & Routes w. Repl. & Terminated \\
\hline
DDD & none & 718.2 & 0.001021 & 9.6 & 8386.9 & 0 & 0 & 27 \\
 DDD & Depot & 934.5 & 0.000802 & 6.7 & 6936.7 & 188 & 124 & 27 \\
 DDD & Depot + 1 per city & 1254.9 & 0.000853 & 6.8 & 6884.8 & 190 & 122 & 27 \\
 DDD & Depot + 3 per city & 2021.6 & 0.000739 & 6.2 & 6430.8 & 189 & 113 & 26 \\
 DDD & Depot + 5 per city & 2306.4 & 0.000761 & 6.3 & 6858.1 & 172 & 105 & 23 \\
 DDD-PL & none & 721.2 & 0.001020 & 9.6 & 8386.9 & 0 & 0 & 27 \\
 DDD-PL & Depot & 640.4 & 0.000552 & 6.7 & 6936.7 & 188 & 124 & 27 \\
 DDD-PL & Depot + 1 per city & 739.2 & 0.000513 & 6.8 & 6884.8 & 190 & 122 & 27 \\
 DDD-PL & Depot + 3 per city & 1030.3 & 0.000351 & 6.2 & 6654.4 & 197 & 117 & 27 \\
 DDD-PL & Depot + 5 per city & 1045.7 & 0.000305 & 6.2 & 6854.9 & 194 & 116 & 26 \\
 MIP & none & 3729.6 & 0.005853 & 10.0 & 9056.6 & 0 & 0 & 21 \\
 MIP & Depot & 4877.2 & 0.005443 & 7.1 & 8100.1 & 55 & 40 & 8 \\
 MIP & Depot + 1 per city & 5542.7 & 0.005701 & 6.2 & 7588.0 & 29 & 25 & 5 \\
 MIP & Depot + 3 per city & 6725.1 & 0.005562 & 4.0 & 10778.2 & 1 & 1 & 1 \\
 MIP & Depot + 5 per city & 6230.4 & 0.005494 & 4.0 & 10746.7 & 3 & 3 & 1 \\
 \hline
\end{tabular}
}
\caption{Results averaged over 27 instances of type C2, R2, and RC2.}\label{tab:type2Instances}
\end{table}

As we can see in Tables~\ref{tab:type1Instances} and~\ref{tab:type2Instances},
the dynamic discretization discovery approaches are magnitudes faster than the savings algorithm using the MIP formulation for route evaluations which fails to terminate for many of the instances within the run time limit of 7200~seconds.
We have to note, that although the route evaluations are embedded within the same savings algorithm, the MIP approach
can produce results that differ from those of the DDD approaches, because the MIP formulation is not restricted to the discretization.
Small deviations in calculating the savings can result in different rankings of the savings and ultimately in different routes being generated.

Instances belonging to sets C1, R1, and RC1, have a smaller freight capacity and a shorter planning horizon compared to instances  belonging to sets C2, R2, and RC2.
Therefore, more vehicles are used for instances of sets C1, R1, and RC1 than for instances of sets C2, R2, and RC2.
Thus routes for instances in sets C1, R1, and RC1 visit fewer customers and the approach requires significantly less computational effort than for instances of sets C2, R2, and RC2.

With fewer stops per route, the likelihood of running out of electric energy is much smaller.
In fact, the solutions for instances of sets R1 and RC1 have no recharging stops at all, because these instances  have a very  short planning horizon, leaving little time to sensibly recharge batteries.
For instances in sets C2, R2, and RC2, the freight capacity and planning horizons are larger.
Therefore, less vehicles are used and many more routes make use  of the possibility to recharge the batteries.

With an increasing number of charging stations, the number of alternative routes via charging stations increases and so does the computational effort required for evaluating these alternatives.
We can see that preloading of vertices helps stabilizing the runtime of the algorithm and leads to significantly shorter computation times for instances with many charging possibilities.
With preloading of vertices, the dynamic discretization discovery approach is able to (heuristically) solve almost all of the instances except for instance r211 with five charging stations per city within the run time limit.
Without run time limit, instance r211 with five charging stations per city can be solved in 8027.5~seconds.

An interesting observation from an application point of view is that the possibility to recharge the battery during a route can significantly reduce the number of vehicles required as well as the total completion time.
Table~\ref{tab:type2Instances} indicates that for instances  in sets C2, R2, and RC2, the number of routes required can be reduced by approximately one third if vehicles can recharge batteries while they are on route.
This demonstrates the importance of considering replenishments within scheduling.
Furthermore, it can be seen that the number of vehicles required is not significantly reduced with an increasing number of charging stations.
In fact, the possibility to recharge the battery at the depot already accounts for the main benefit in terms of the number of vehicles required.
Additional public charging infrastructure may furthermore reduce the detour required to reach a charging station and thus can contribute to smaller total completion times and less routes requiring replenishments.

\section{Conclusions}

In this paper we introduce a time-dependent activity scheduling problem in which activities consume a limited resource during execution and activity durations and resource consumptions are time-dependent.
Moreover, we study the case in which  the resource can be replenished between conducting subsequent activities.
We propose a dynamic discretization discovery algorithm which is based on partially time-expanded networks which are dynamically filled with additional vertices.
The dynamic discretization discovery algorithm can be used for general duration and consumption functions and only requires the first-in-first-out property for activity durations.
For the case that the dynamic discretization discovery is embedded in an iterative solution procedure that frequently evaluates activity sequences that start with the same activities, we propose to preload the partially expanded network with vertices generated in previous iterations.
This preloading of vertices can significantly reduce the computational effort required.

We evaluate our approaches on a case of determining routes for a fleet of electric delivery vehicles for last-mile deliveries.
Our experimental results indicate that the dynamic discretization discovery algorithm is magnitudes faster than the commercial solver CPLEX using a mixed integer programming formulation of the problem.
Furthermore, our experiments show that for our instances, which are motivated by the real-life case of last-mile delivery operations, the possibility to recharge batteries en-route can significantly reduce the number of vehicles required and the total completion time.

\section*{Acknowledgements}
This research is supported by the German Research Foundation (DFG) under grant GO 1841/5-1 and by the Federal Ministry of Transport and Digital Infrastructure (BMVI) for the project ZUKUNFT.DE (funding code 03EMF0101K).

\bibliographystyle{ormsv080}
\bibliography{literature}

\begin{thebibliography}{56}
\expandafter\ifx\csname natexlab\endcsname\relax\def\natexlab#1{#1}\fi
\expandafter\ifx\csname url\endcsname\relax
  \def\url#1{{\tt #1}}\fi
\expandafter\ifx\csname urlprefix\endcsname\relax\def\urlprefix{URL }\fi
\expandafter\ifx\csname urlstyle\endcsname\relax
  \expandafter\ifx\csname doi\endcsname\relax
  \def\doi#1{doi:\discretionary{}{}{}#1}\fi \else
  \expandafter\ifx\csname doi\endcsname\relax
  \def\doi{doi:\discretionary{}{}{}\begingroup \urlstyle{rm}\Url}\fi \fi

\bibitem[{Alidaee and Womer(1999)}]{AW1999}
Alidaee, B., N.~K. Womer. 1999.
\newblock Scheduling with time dependent processing times: Review and
  extensions.
\newblock {\it Journal of the Operational Research Society\/} {\bf 50}(7)
  711--720.

\bibitem[{Archetti and Savelsbergh(2009)}]{AS09}
Archetti, C., M.~W.~P. Savelsbergh. 2009.
\newblock The trip scheduling problem.
\newblock {\it Transportation Science\/} {\bf 43}(4) 417--431.
\newblock \doi{10.1287/trsc.1090.0278}.

\bibitem[{Baum et~al.(2019)Baum, Dibbelt, Gemsa, Wagner, and
  Zündorf}]{BDGWZ2019}
Baum, M., J.~Dibbelt, A.~Gemsa, D.~Wagner, T.~Zündorf. 2019.
\newblock Shortest feasible paths with charging stops for battery electric
  vehicles.
\newblock {\it Transportation Science\/} {\bf 53}(6) 1501--1799.

\bibitem[{Boland et~al.(2017)Boland, Hewitt, Marshall, and
  Savelsbergh}]{BHMS2017}
Boland, N., M.~Hewitt, L.~Marshall, M.~Savelsbergh. 2017.
\newblock The continuous-time service network design problem.
\newblock {\it Operations Research\/} {\bf 65}(5) 1303--1321.
\newblock \doi{10.1287/opre.2017.1624}.

\bibitem[{Cattaruzza et~al.(2014)Cattaruzza, Absi, Feillet, and
  Vidal}]{CAFV2014}
Cattaruzza, D., N.~Absi, D.~Feillet, T.~Vidal. 2014.
\newblock A memetic algorithm for the multi trip vehicle routing problem.
\newblock {\it European Journal of Operational Research\/} {\bf 236} 833--848.

\bibitem[{Cheng et~al.(2004)Cheng, Ding, and Lin}]{CDL2004}
Cheng, T.C.E, Q.~Ding, B.M.T Lin. 2004.
\newblock A concise survey of scheduling with time-dependent processing times.
\newblock {\it European Journal of Operational Research\/} {\bf 152}(1) 1 --
  13.
\newblock \doi{10.1016/S0377-2217(02)00909-8}.

\bibitem[{Clarke and Wright(1964)}]{CW1964}
Clarke, G., J.~W. Wright. 1964.
\newblock Scheduling of vehicles from a central depot to a number of delivery
  points.
\newblock {\it Operations Research\/} {\bf 12}(4) 519--643.
\newblock \doi{10.1287/opre.12.4.568}.

\bibitem[{Cordeau et~al.(2014)Cordeau, Ghiani, and Guerriero}]{CGG2014}
Cordeau, J.-F., G.~Ghiani, E.~Guerriero. 2014.
\newblock Analysis and branch-and-cut algorithm for the time-dependent
  travelling salesman problem.
\newblock {\it Transportation Science\/} {\bf 48} 46--58.

\bibitem[{Dabia et~al.(2013)Dabia, Ropke, Van~Woensel, and De~Kok}]{DRvWdK2013}
Dabia, S., S.~Ropke, T.~Van~Woensel, T.~De~Kok. 2013.
\newblock Branch and price for the time-dependent vehicle routing problem with
  time windows.
\newblock {\it Transportation science\/} {\bf 47}(3) 380--396.

\bibitem[{Drezet and Billaut(2008)}]{DB2008}
Drezet, L.-E., J.-C. Billaut. 2008.
\newblock A project scheduling problem with labour constraints and
  time-dependent activities requirements.
\newblock {\it International Journal of Production Economics\/} {\bf 112}(1)
  217 -- 225.

\bibitem[{{European Union}(2003)}]{EU03}
{European Union}. 2003.
\newblock {Directive 2003/88/EC of the European Parliament and of the Council
  of 4 November 2003 concerning certain aspects of the organisation of working
  time}.
\newblock {\it Official Journal of the European Communities\/} {\bf L 299}
  9--19.

\bibitem[{{Federal Motor Carrier Safety Administration}(2011)}]{FMCSA11}
{Federal Motor Carrier Safety Administration}. 2011.
\newblock Hours of service of drivers.
\newblock Federal Register Vol. 76, No. 248, p. 81134 - 81188.

\bibitem[{Felipe et~al.(2014)Felipe, Ortuño, Righini, and Tirado}]{FORT2014}
Felipe, Á., M.~T. Ortuño, G.~Righini, G.~Tirado. 2014.
\newblock A heuristic approach for the green vehicle routing problem with
  multiple technologies and partial recharges.
\newblock {\it Transportation Research Part E: Logistics and Transportation
  Review\/} {\bf 71} 111 -- 128.
\newblock \doi{10.1016/j.tre.2014.09.003}.

\bibitem[{Fleischmann(1990)}]{Fle1990}
Fleischmann, B. 1990.
\newblock The vehicle routing problem with multiple use of vehicles.
\newblock Fachbereich Wirtschaftswissenschaften, Universität Hamburg.

\bibitem[{Foschini et~al.(2012)Foschini, Hershberger, and Suri}]{Foschini2012}
Foschini, L., J.~Hershberger, S.~Suri. 2012.
\newblock On the complexity of time-dependent shortest paths.
\newblock {\it Algorithmica\/} {\bf 68}(4) 1075--1097.

\bibitem[{Galvin(2017)}]{Galvin2017}
Galvin, R. 2017.
\newblock Energy consumption effects of speed and acceleration in electric
  vehicles: Laboratory case studies and implications for drivers and
  policymakers.
\newblock {\it Transportation Research Part D: Transport and Environment\/}
  {\bf 53} 234--248.
\newblock \doi{10.1016/j.trd.2017.04.020}.

\bibitem[{Gendreau et~al.(2015)Gendreau, Ghiani, and Guerriero}]{GGG2015}
Gendreau, M., G.~Ghiani, E.~Guerriero. 2015.
\newblock Time-dependent routing problems: A review.
\newblock {\it Computers and Operations Research\/} {\bf 64} 189 -- 197.
\newblock \doi{10.1016/j.cor.2015.06.001}.

\bibitem[{Goel(2009)}]{Go09_VRTDSP}
Goel, A. 2009.
\newblock Vehicle scheduling and routing with drivers' working hours.
\newblock {\it Transportation Science\/} {\bf 43}(1) 17--26.
\newblock \doi{10.1287/trsc.1070.0226}.

\bibitem[{Goel(2010)}]{Go10_EUTDSP}
Goel, A. 2010.
\newblock {Truck driver scheduling in the European Union}.
\newblock {\it Transportation Science\/} {\bf 44}(4) 429--441.
\newblock \doi{10.1287/trsc.1100.0330}.

\bibitem[{Goel(2012{\natexlab{a}})}]{Go12_AUSMDTDSP}
Goel, A. 2012{\natexlab{a}}.
\newblock {A mixed integer programming formulation and effective cuts for
  minimising schedule durations of Australian truck drivers}.
\newblock {\it Journal of Scheduling\/} {\bf 15} 733--741.

\bibitem[{Goel(2012{\natexlab{b}})}]{Go12_MDTDSP}
Goel, A. 2012{\natexlab{b}}.
\newblock The minimum duration truck driver scheduling problem.
\newblock {\it EURO Journal on Transportation and Logistics\/} {\bf 1}(4)
  285--306.
\newblock \doi{10.1007/s13676-012-0014-9}.

\bibitem[{Goel(2012{\natexlab{c}})}]{Go12_CANMDTDSP}
Goel, A. 2012{\natexlab{c}}.
\newblock {The Canadian minimum duration truck driver scheduling problem}.
\newblock {\it Computers \& Operations Research\/} {\bf 39}(10) 2267--2456.

\bibitem[{Goel(2014)}]{Go13_US2013TDSP}
Goel, A. 2014.
\newblock {Hours of service regulations in the United States and the 2013 rule
  change}.
\newblock {\it Transport Policy\/} {\bf 33} 48--55.

\bibitem[{Goel et~al.(2012)Goel, Archetti, and Savelsbergh}]{GAS12_AUSTDSP}
Goel, A., C.~Archetti, M.~Savelsbergh. 2012.
\newblock {Truck driver scheduling in Australia}.
\newblock {\it Computers \& Operations Research\/} {\bf 39}(5) 1122--1132.
\newblock \doi{10.1016/j.cor.2011.05.021}.

\bibitem[{Goel and Kok(2012)}]{GK12_EUTeamTDSP}
Goel, A., L.~Kok. 2012.
\newblock {Efficient scheduling of team truck drivers in the European Union}.
\newblock {\it Flexible Services and Manufacturing Journal\/} {\bf 24}(1)
  81--96.

\bibitem[{Goel and Rousseau(2012)}]{GR12_CANTDSP}
Goel, A., L.~M. Rousseau. 2012.
\newblock {Truck driver scheduling in Canada}.
\newblock {\it Journal of Scheduling\/} {\bf 15}(6) 783--799.

\bibitem[{Goel and Vidal(2014)}]{GV14_VRTDSP}
Goel, A., T.~Vidal. 2014.
\newblock Hours of service regulations in road freight transport: An
  optimization-based international assessment.
\newblock {\it Transportation Science\/} {\bf 48}(3) 391--412.
\newblock \doi{10.1287/trsc.2013.0477}.

\bibitem[{{Hart} et~al.(1968){Hart}, {Nilsson}, and {Raphael}}]{Astar}
{Hart}, P.~E., N.~J. {Nilsson}, B.~{Raphael}. 1968.
\newblock A formal basis for the heuristic determination of minimum cost paths.
\newblock {\it IEEE Transactions on Systems Science and Cybernetics\/} {\bf
  4}(2) 100--107.
\newblock \doi{10.1109/TSSC.1968.300136}.

\bibitem[{Hashimoto et~al.(2008)Hashimoto, Yagiura, and Ibaraki}]{HYI2008}
Hashimoto, H., M.~Yagiura, T.~Ibaraki. 2008.
\newblock An iterated local search algorithm for the time-dependent vehicle
  routing problem with time windows.
\newblock {\it Discrete Optimization\/} {\bf 5}(2) 434--456.

\bibitem[{He et~al.(2019)He, Boland, Nemhauser, and
  Savelsbergh}]{HBNS2019eprint}
He, E., N.~Boland, G.~Nemhauser, M.~Savelsbergh. 2019.
\newblock Dynamic discretization discovery algorithms for time-dependent
  shortest path problems.
\newblock {\it Optimization online\/} {\bf 7082}.

\bibitem[{Irnich et~al.(2014)Irnich, Toth, and Vigo}]{ITV2014}
Irnich, S., P.~Toth, D.~Vigo. 2014.
\newblock The family of vehicle routing problems.
\newblock P.~Toth, D.~Vigo, eds., {\it Vehicle Routing: Problems, Methods, and
  Applications\/}, chap.~1. MOS-SIAM Series on Optimization, SIAM, 1--34.

\bibitem[{Kleff(2019)}]{KLeff2019}
Kleff, A. 2019.
\newblock Scheduling and routing of truck drivers considering regulations on
  drivers working hours.
\newblock Ph.D. thesis, Karlsruher Institut für Technologie (KIT), Karlsruhe.
\newblock \doi{10.5445/IR/1000097855}.

\bibitem[{Kok et~al.(2011)Kok, Hans, and Schutten}]{KHS2011}
Kok, A.~L., E.~W. Hans, J.~M.~J. Schutten. 2011.
\newblock Optimizing departure times in vehicle routes.
\newblock {\it European Journal of Operational Research\/} {\bf 210}(3)
  579--587.

\bibitem[{{Legal Information Institute}(2020)}]{LII}
{Legal Information Institute}. 2020.
\newblock {Electronic Code of Federal Regulations. 14 CFR § 91.409
  Inspections.}
\newblock Cornell Law School.
\newblock \urlprefix\url{https://www.law.cornell.edu/cfr/text/14/91.409}.

\bibitem[{Liu et~al.(2017)Liu, Laporte, Chen, and He}]{LLCH2017}
Liu, X., G.~Laporte, Y.~Chen, R.~He. 2017.
\newblock An adaptive large neighborhood search metaheuristic for agile
  satellite scheduling with time-dependent transition time.
\newblock {\it Computers and Operations Research\/} {\bf 86} 41 -- 53.
\newblock \doi{10.1016/j.cor.2017.04.006}.

\bibitem[{Montero et~al.(2017)Montero, M{\'e}ndez-D{\'\i}az, and
  Miranda-Bront}]{MMM2017}
Montero, A., I.~M{\'e}ndez-D{\'\i}az, J.~J. Miranda-Bront. 2017.
\newblock An integer programming approach for the time-dependent traveling
  salesman problem with time windows.
\newblock {\it Computers \& Operations Research\/} {\bf 88} 280--289.

\bibitem[{Montoya et~al.(2017)Montoya, Guéret, Mendoza, and
  Villegas}]{MONTOYA201787}
Montoya, A., C.~Guéret, J.~E. Mendoza, J.~G. Villegas. 2017.
\newblock The electric vehicle routing problem with nonlinear charging
  function.
\newblock {\it Transportation Research Part B: Methodological\/} {\bf 103} 87
  -- 110.
\newblock \doi{10.1016/j.trb.2017.02.004}.
\newblock Green Urban Transportation.

\bibitem[{Mosheiov(1994)}]{Mos1994}
Mosheiov, G. 1994.
\newblock Scheduling jobs under simple linear deterioration.
\newblock {\it Computers and Operations Research\/} {\bf 21}(6) 653 -- 659.
\newblock \doi{10.1016/0305-0548(94)90080-9}.

\bibitem[{Nannicini et~al.(2012)Nannicini, Delling, Schultes, and
  Liberti}]{NDSL2012}
Nannicini, G., D.~Delling, D.~Schultes, L.~Liberti. 2012.
\newblock Bidirectional a* search on time-dependent road networks.
\newblock {\it Networks\/} {\bf 59}(2) 240--251.

\bibitem[{Pelletier et~al.(2016)Pelletier, Jabali, and Laporte}]{PJL2016}
Pelletier, S., O.~Jabali, G.~Laporte. 2016.
\newblock 50th anniversary invited article—goods distribution with electric
  vehicles: Review and research perspectives.
\newblock {\it Transportation Science\/} {\bf 50}(1) 3--22.
\newblock \doi{10.1287/trsc.2015.0646}.

\bibitem[{Prescott-Gagnon et~al.(2010)Prescott-Gagnon, Desaulniers, Drexl, and
  Rousseau}]{PDDR10}
Prescott-Gagnon, E., G.~Desaulniers, M.~Drexl, L.~M. Rousseau. 2010.
\newblock European driver rules in vehicle routing with time windows.
\newblock {\it Transportation Science\/} {\bf 44}(4) 455--473.
\newblock \doi{10.1287/trsc.1100.0328}.

\bibitem[{Rancourt et~al.(2013)Rancourt, Cordeau, and Laporte}]{RCL13}
Rancourt, M.~E., J.~F. Cordeau, G.~Laporte. 2013.
\newblock Long-haul vehicle routing and scheduling with working hour rules.
\newblock {\it Transportation Science\/} {\bf 47}(1) 81--107.
\newblock \doi{10.1287/trsc.1120.0417}.

\bibitem[{Rest and Hirsch(2016)}]{RH2016}
Rest, K.-D., P.~Hirsch. 2016.
\newblock Daily scheduling of home health care services using time-dependent
  public transport.
\newblock {\it Flexible Services and Manufacturing Journal\/} {\bf 28}(3)
  495--525.

\bibitem[{Schiffer et~al.(2019)Schiffer, Schneider, Walther, and
  Laporte}]{SSWL2019}
Schiffer, M., M.~Schneider, G.~Walther, G.~Laporte. 2019.
\newblock Vehicle routing and location routing with intermediate stops: A
  review.
\newblock {\it Transportation Science\/} {\bf 53}(2) 319--343.
\newblock \doi{10.1287/trsc.2018.0836}.

\bibitem[{Schneider et~al.(2014)Schneider, Stenger, and Goeke}]{SSG2014}
Schneider, M., A.~Stenger, D.~Goeke. 2014.
\newblock The electric vehicle-routing problem with time windows and recharging
  stations.
\newblock {\it Transportation Science\/} {\bf 48}(4) 500--520.
\newblock \doi{10.1287/trsc.2013.0490}.

\bibitem[{Schrank et~al.(2019)Schrank, Eisele, and
  Lomax}]{urbanmobilityreport2019}
Schrank, D., B.~Eisele, T.~Lomax. 2019.
\newblock {2019 Urban Mobility Report}.
\newblock Texas A\&M Transportation Institute.
\newblock
  \urlprefix\url{https://static.tti.tamu.edu/tti.tamu.edu/documents/mobility-report-2019.pdf}.

\bibitem[{Solomon(1987)}]{Solomon1987}
Solomon, M.~M. 1987.
\newblock Algorithms for the vehicle routing and scheduling problems with time
  window constraints.
\newblock {\it Operations Research\/} {\bf 35}(2) 166--324.
\newblock \doi{10.1287/opre.35.2.254}.

\bibitem[{Stecco et~al.(2008)Stecco, Cordeau, and Moretti}]{SCM2008}
Stecco, G., J.-F. Cordeau, E.~Moretti. 2008.
\newblock A branch-and-cut algorithm for a production scheduling problem with
  sequence-dependent and time-dependent setup times.
\newblock {\it Computers and Operations Research\/} {\bf 35} 2635--2655.

\bibitem[{Sweda et~al.(2017)Sweda, Dolinskaya, and Klabjan}]{SDK2017}
Sweda, T.~M., I.~S. Dolinskaya, D.~Klabjan. 2017.
\newblock Optimal recharging policies for electric vehicles.
\newblock {\it Transportation Science\/} {\bf 51}(2) 457--479.
\newblock \doi{10.1287/trsc.2015.0638}.

\bibitem[{Taillard et~al.(1996)Taillard, Laporte, and Gendreau}]{TLG96}
Taillard, {\'E}.~D., G.~Laporte, M.~Gendreau. 1996.
\newblock Vehicle routeing with multiple use of vehicles.
\newblock {\it Journal of the Operational Research society\/} {\bf 47}(8)
  1065--1070.

\bibitem[{Ta{\c{s}} et~al.(2016)Ta{\c{s}}, Gendreau, Jabali, and
  Laporte}]{TGJL2016}
Ta{\c{s}}, D., M.~Gendreau, O.~Jabali, G.~Laporte. 2016.
\newblock The traveling salesman problem with time-dependent service times.
\newblock {\it European Journal of Operational Research\/} {\bf 248}(2)
  372--383.

\bibitem[{Tilk and Goel(2020)}]{TiGo2020}
Tilk, C., A.~Goel. 2020.
\newblock Bidirectional labeling for solving vehicle routing and truck driver
  scheduling problems.
\newblock {\it European Journal of Operational Research\/} {\bf 283}(1) 108 --
  124.
\newblock \doi{10.1016/j.ejor.2019.10.038}.

\bibitem[{Vidal et~al.(2015)Vidal, Crainic, Gendreau, and Prins}]{VCGP2015}
Vidal, T., T.~G. Crainic, M.~Gendreau, C.~Prins. 2015.
\newblock Timing problems and algorithms: Time decisions for sequences of
  activities.
\newblock {\it Networks\/} {\bf 65}(2) 102--128.

\bibitem[{Visser and Spliet(2020)}]{VS2017}
Visser, T.~R., R.~Spliet. 2020.
\newblock Efficient move evaluations for time-dependent vehicle routing
  problems.
\newblock {\it Transportation Science\/} {\bf 54}(4) 1091--1112.
\newblock \doi{10.1287/trsc.2019.0938}.

\bibitem[{Vu et~al.(2020)Vu, Hewitt, Boland, and Savelsbergh}]{VHBS2019}
Vu, D.~M., M.~Hewitt, N.~Boland, M.~Savelsbergh. 2020.
\newblock Dynamic discretization discovery for solving the time-dependent
  traveling salesman problem with time windows.
\newblock {\it Transportation Science\/} {\bf 54}(3) 703--720.
\newblock \doi{10.1287/trsc.2019.0911}.

\bibitem[{Zhang et~al.(2009)Zhang, Batta, and Nagi}]{ZBN2009}
Zhang, M., R.~Batta, R.~Nagi. 2009.
\newblock Modeling of workflow congestion and optimization of flow routing in a
  manufacturing/warehouse facility.
\newblock {\it Management Science\/} {\bf 55}(2) 267--280.

\end{thebibliography}

\clearpage
\section*{Appendix}
\appendix


\section{Mixed-integer program for piecewise linear duration and consumption functions}\label{app:MIP}

In this appendix we provide a formulation of the TDASPR under the assumption that functions $\tau_i(t)$, $\theta_i(t)$, and $\rho_i(t)$ are non-negative, piecewise linear, and lower semi-continuous for each $1\leq i \leq n$
and that the replenishment duration $\Delta_i$ is a constant  for each $1\leq i \leq n$.
Like before we assume that the FIFO property holds, i.e., that $\theta_i(t)$ is non-decreasing.
Note that we do not require $\tau_i(t)$, $\theta_i(t)$, and $\rho_i(t)$ to be continuous.
This allows us to consider applications such as durations of trips conducted with public transport services where discontinuities occur when the scheduled departure time is missed and waiting time until the next scheduled departure needs to be considered.

Let $S^\tau_i$ and $S^\rho_i$ denote the set of piecewise linear segments of $\tau_i$ and $\rho_i$.
For each linear segment $s\in S^\tau_i \cup S^\rho_i$, let $e_s$ and $l_s$ denote the start and end of the segment and let $a_s$ denote the slope and $b_s$ the intercept of the respective linear function.
Thus, if the start time $t_i$ of activity $i$ is within segment $s\in S^\tau_i$, then $\tau_{i}(t_i) = a_s t_i + b_s$.

With this, the problem can be modelled with binary variables
$x_{i,s}$ indicating whether segment $s \in S^\tau_{i}$ of the duration function or segment $s \in S^\rho_{i}$ of the consumption function is used for activity~$i$.
A binary variable $y_{i}$ indicates whether the resource is replenished after conducting the activity~$i$,
linear variables $d_i$ and $q_i$ indicate the duration and resource consumption of activity~$i$,
and $t_{i}$ indicates the start time of activity~$i$.
Linear variables $t_{i,s}$ indicate time values for activity~$i$ associated to segment $s\in  S^\tau_i$ of the duration function or for segment $s\in  S^\rho_i$ of the consumption function.

The TDASP with piecewise linear duration and consumption functions can be formulated as a mixed-integer program by

minimise
\begin{equation}\label{constr:completion}
t_n + d_n
\end{equation}
subject to
\begin{subequations}
\begin{equation}\label{constr:onetimesegment}
\sum_{s \in S^\tau_{i}} x_{i,s} = 1 \forall 1\leq i \leq n
\end{equation}
\begin{equation}\label{constr:righttimesegment}
e_s x_{i,s} \leq  t_{i,s} \leq l_s x_{i,s} \forall 1\leq i \leq n, s \in S^\tau_{i}
\end{equation}
\begin{equation}\label{constr:starttau}
t_i = \sum_{s \in S^\tau_{i}}  t_{i,s} \forall 1\leq i \leq n
\end{equation}
\begin{equation}\label{constr:duration}
d_i = \sum_{s \in S^\tau_i} (a_s t_{i,s} + b_s x_{i,s}) \forall 1\leq i \leq n
\end{equation}
\begin{equation}\label{constr:timing}
t_i + d_i + \Delta_i y_i \leq t_{i+1}  \forall 1\leq i < n
\end{equation}

\end{subequations}
\begin{subequations}
\begin{equation}\label{constr:oneconsumptionsegment}
\sum_{s \in S^\rho_{i}} x_{i,s} = 1 \forall 1\leq i \leq n
\end{equation}
\begin{equation}\label{constr:rightconsumptionsegment}
e_s x_{i,s} \leq  t_{i,s} \leq l_s x_{i,s} \forall 1\leq i \leq n, s \in S^\rho_{i}
\end{equation}
\begin{equation}\label{constr:startrho}
t_i = \sum_{s \in S^\rho_{i}}  t_{i,s} \forall 1\leq i \leq n
\end{equation}
\begin{equation}\label{constr:consumption}
q_i = \sum_{s \in S^\rho_i} (a_s t_{i,s} + b_s x_{i,s}) \forall 1\leq i \leq n
\end{equation}
\begin{equation}\label{constr:capacity}
\sum_{k=i}^j q_k \leq Q + M \sum_{k=i}^{j-1} y_k \forall 1\leq i \leq j \leq n
\end{equation}

\end{subequations}
\begin{equation}\label{constr:x}
x_{i,s} \in \{0,1\}  \forall s \in S^\tau_i \cup S^\rho_i, 1 \leq i  \leq n
\end{equation}
\begin{equation}\label{constr:y}
y_{i} \in \{0,1\}  \forall 1 \leq i  \leq n
\end{equation}

The objective (\ref{constr:completion}) is to minimise the completion time.
Constraint (\ref{constr:onetimesegment}) ensures that for each activity exactly one of the time segments is selected.
Constraints (\ref{constr:righttimesegment}) and (\ref{constr:starttau}) ensure that $t_{i,s}$ is zero if $x_{i,s}=0$ and  $t_{i,s}\in [e_s,l_s]$ if $x_{i,s}=1$, and that the start time of activity~$i$ is set accordingly.
Note that the boundary values of $[e_s,l_s]$ might not belong to the segment $s$ because we do not require continuity of the functions.
Due to the lower semi-continuity, however, this is not a problem in an optimal solution.
The duration $d_i$ of activity $i$ can be obtained by constraint (\ref{constr:duration}).
Because of constraint (\ref{constr:timing}), the start time of any activity must not be before the completion time of the previous activity plus the time required for a potential replenishment.
Constraints (\ref{constr:oneconsumptionsegment}) to (\ref{constr:consumption}) are analogous to constraints (\ref{constr:onetimesegment}) to (\ref{constr:duration}) for the resource consumption.
Constraint (\ref{constr:capacity}) requires that the cumulative resource consumption of any sequence of activities does not exceed $Q$ unless the resource is replenished.
Lastly, the domain of the binary variables is given by (\ref{constr:x}) and (\ref{constr:y}).

\section{Restrictions on replenishments}\label{app:restrictedreplenishments}
A variation of the TDASPR, in which replenishments are forbidden or required after certain activities, can be obtained by constraining the binary variables indicating the choice of whether to conduct a replenishment or not.
This can be achieved by adding a constraint $y_i=0$ or $y_i=1$ for each activity~$i$ which prohibits or requires a replenishment after its completion in the problems given by (\ref{constr:completion:replenishments}) to (\ref{constr:domains:replenishments})
or (\ref{constr:completion}) to (\ref{constr:y}).

If a replenishment is not allowed after an activity, we can avoid replenishments by adapting Algorithm~\ref{alg:replenishments:A*} in such a way that the earliest time after the replenishment $t^*$ is set to a value that exceeds~$l_{i+1}$ in line~\ref{line:earliestreplenishmenttime}.

Let us now consider the case that a replenishment is required after activity~$i$.
In this case Property~\ref{prop:successor} loses its relevance for all vertices of activity $i$, because the property does not consider the duration of the required replenishment.
The main purpose of Property~\ref{prop:successor} was to ensure that the vertex representing the earliest possible time at which the next activity can be conducted is included in the partially expanded network.
In Algorithm~\ref{alg:replenishments:A*}, this is achieved by line~\ref{line:earliestreplenishment}  which includes a vertex for the next activity and the earliest time after the required replenishment.
We can adapt Algorithm~\ref{alg:replenishments:A*} in such a way that lines~\ref{line:noreplenishmentloop} to~\ref{line:noreplenishmentloopend}, which update the labels assuming that no replenishment is conducted, are skipped if a replenishment is required.
Furthermore, the condition in line~\ref{line:recurse} of function ${\tt DDD\!::\!addRecursive}$ can be changed in such a way that the recursive insertion of vertices is terminated if a replenishment is required after the current activity.
With these changes our approach can be used to solve the TDASPR with required replenishments even if Property~\ref{prop:successor} does not hold for activities requiring a replenishment.


\section{Detailed results}\label{sec:detailed_results}

Tables~\ref{tab:0-service} to ~\ref{tab:21-service} show detailed results of our computational experiments.
The first column gives the name of the instance.
The results for the approach based on dynamic discretization discovery with replenishments (DDD) are given in columns 2 to 5.
The results for the dynamic discretization discover approach with replenishments and preloading (DDD-PL) are given in columns 6 to 9.
The results for the approach using CPLEX for solving for the mixed integer program (MIP) are given in columns 10 to 13.
For each approach, the total computation time is given in the column titled \emph{CPU},
the number of vehicles required is given in the column titled \emph{Veh.},
the total completion time is given in the column titled \emph{Compl.},
and the number of replenishments at service stations is given in the column titled \emph{Repl.}.
Dashes in the columns for MIP indicate that no solution was found within the time limit of 7200~seconds.
For the dynamic discretization discovery approaches DDD and DDD-PL, the results violating the time limit are written in italic.

\begin{table}[htb]
\centering
\renewcommand{\arraystretch}{0.7}
{\scriptsize

\begin{tabular}{lrrrrrrrrrrrr}
\hline
& DDD &&&& DDD-PL &&&& MIP &&& \\
\hline
Instance & CPU & Veh. & Compl. & Repl. & CPU & Veh. & Compl. & Repl. & CPU & Veh. & Compl. & Repl. \\
\hline
c101 & 101.4 & 13 & 11589.1 & 0 & 97.8 & 13 & 11589.1 & 0 & 1083.0 & 14 & 11507.9 & 0\\
c102 & 645.3 & 16 & 12064.9 & 0 & 628.2 & 16 & 12064.9 & 0 & 2189.1 & 15 & 11974.3 & 0\\
c103 & 271.2 & 11 & 11809.4 & 0 & 256.5 & 11 & 11809.4 & 0 & 3308.8 & 12 & 11978.7 & 0\\
c104 & 60.2 & 11 & 11788.7 & 0 & 49.2 & 11 & 11788.7 & 0 & 4389.2 & 11 & 11644.2 & 0\\
c105 & 96.4 & 15 & 10946.8 & 0 & 89.1 & 15 & 10946.8 & 0 & 1505.1 & 14 & 10930.6 & 0\\
c106 & 116.9 & 16 & 12357.3 & 0 & 110.3 & 16 & 12357.3 & 0 & 1601.9 & 16 & 12432.2 & 0\\
c107 & 126.7 & 13 & 10653.1 & 0 & 115.6 & 13 & 10653.1 & 0 & 2054.6 & 12 & 10487.1 & 0\\
c108 & 139.0 & 13 & 10884.0 & 0 & 122.6 & 13 & 10884.0 & 0 & 2351.3 & 15 & 11077.0 & 0\\
c109 & 160.4 & 13 & 10612.8 & 0 & 143.1 & 13 & 10612.8 & 0 & 3379.3 & 13 & 10606.5 & 0\\
c201 & 29.4 & 5 & 11321.1 & 0 & 7.3 & 5 & 11321.1 & 0 & 2054.1 & 5 & 11317.0 & 0\\
c202 & 417.9 & 8 & 17256.2 & 0 & 382.6 & 8 & 17256.2 & 0 & 2863.3 & 9 & 17647.4 & 0\\
c203 & 143.3 & 9 & 17404.2 & 0 & 120.2 & 9 & 17404.2 & 0 & 4360.3 & 11 & 18616.1 & 0\\
c204 & 4599.1 & 8 & 16319.8 & 0 & 4804.2 & 8 & 16319.8 & 0 & --- & --- & --- & ---\\
c205 & 47.6 & 5 & 11123.6 & 0 & 16.2 & 5 & 11123.6 & 0 & 2662.4 & 5 & 11119.2 & 0\\
c206 & 171.2 & 8 & 15428.4 & 0 & 126.0 & 8 & 15428.4 & 0 & 2902.9 & 8 & 15554.2 & 0\\
c207 & 248.5 & 10 & 19200.6 & 0 & 202.9 & 10 & 19200.6 & 0 & 2825.2 & 10 & 18998.1 & 0\\
c208 & 97.1 & 5 & 10839.4 & 0 & 42.9 & 5 & 10839.4 & 0 & 5048.7 & 5 & 10834.5 & 0\\
r101 & 21.2 & 28 & 4596.6 & 0 & 18.5 & 28 & 4596.6 & 0 & 473.6 & 28 & 4619.5 & 0\\
r102 & 37.3 & 23 & 3864.5 & 0 & 31.3 & 23 & 3864.5 & 0 & 1767.6 & 23 & 3835.1 & 0\\
r103 & 33.7 & 19 & 3470.5 & 0 & 24.4 & 19 & 3470.5 & 0 & 2853.6 & 19 & 3491.1 & 0\\
r104 & 25.4 & 15 & 2951.3 & 0 & 12.0 & 15 & 2951.3 & 0 & 4224.7 & 14 & 2839.2 & 0\\
r105 & 44.9 & 22 & 3717.0 & 0 & 41.4 & 22 & 3717.0 & 0 & 852.4 & 23 & 3913.1 & 0\\
r106 & 38.0 & 21 & 3373.2 & 0 & 29.4 & 21 & 3373.2 & 0 & 2228.1 & 21 & 3332.1 & 0\\
r107 & 36.0 & 19 & 3093.4 & 0 & 25.7 & 19 & 3093.4 & 0 & 3216.4 & 20 & 3241.5 & 0\\
r108 & 24.6 & 16 & 2937.1 & 0 & 9.9 & 16 & 2937.1 & 0 & 4398.2 & 16 & 2917.9 & 0\\
r109 & 56.4 & 20 & 3273.5 & 0 & 48.4 & 20 & 3273.5 & 0 & 1811.9 & 19 & 3185.3 & 0\\
r110 & 37.7 & 18 & 2964.8 & 0 & 26.9 & 18 & 2964.8 & 0 & 2859.1 & 18 & 3036.1 & 0\\
r111 & 33.4 & 18 & 3126.4 & 0 & 21.3 & 18 & 3126.4 & 0 & 3153.0 & 18 & 3079.2 & 0\\
r112 & 28.2 & 15 & 2638.0 & 0 & 9.2 & 15 & 2638.0 & 0 & 4878.3 & 15 & 2632.5 & 0\\
r201 & 254.3 & 12 & 6799.5 & 0 & 246.2 & 12 & 6799.5 & 0 & 2502.3 & 12 & 6839.2 & 0\\
r202 & 604.6 & 11 & 6422.3 & 0 & 592.1 & 11 & 6422.3 & 0 & 3366.1 & 11 & 6204.6 & 0\\
r203 & 415.7 & 9 & 4988.8 & 0 & 420.5 & 9 & 4988.8 & 0 & 4085.9 & 9 & 4880.5 & 0\\
r204 & 783.1 & 8 & 4210.2 & 0 & 797.1 & 8 & 4210.2 & 0 & --- & --- & --- & ---\\
r205 & 493.0 & 10 & 5732.0 & 0 & 491.3 & 10 & 5732.0 & 0 & 3678.4 & 11 & 6038.5 & 0\\
r206 & 1064.1 & 10 & 5211.6 & 0 & 1155.0 & 10 & 5211.6 & 0 & 4797.2 & 10 & 5341.0 & 0\\
r207 & 814.1 & 9 & 4700.4 & 0 & 844.3 & 9 & 4700.4 & 0 & 5726.8 & 9 & 4453.6 & 0\\
r208 & 810.9 & 8 & 3528.3 & 0 & 820.8 & 8 & 3528.3 & 0 & --- & --- & --- & ---\\
r209 & 785.4 & 11 & 5820.9 & 0 & 784.0 & 11 & 5820.9 & 0 & 4621.4 & 11 & 5785.2 & 0\\
r210 & 536.3 & 10 & 5400.0 & 0 & 543.4 & 10 & 5400.0 & 0 & 4554.3 & 11 & 5888.0 & 0\\
r211 & 1463.2 & 9 & 4499.5 & 0 & 1501.0 & 9 & 4499.5 & 0 & --- & --- & --- & ---\\
rc101 & 26.5 & 25 & 4553.7 & 0 & 24.5 & 25 & 4553.7 & 0 & 668.5 & 24 & 4400.1 & 0\\
rc102 & 34.5 & 23 & 4007.6 & 0 & 28.9 & 23 & 4007.6 & 0 & 1702.7 & 23 & 3933.6 & 0\\
rc103 & 31.0 & 21 & 3739.8 & 0 & 22.4 & 21 & 3739.8 & 0 & 2748.0 & 21 & 3774.4 & 0\\
rc104 & 30.2 & 15 & 3123.7 & 0 & 16.0 & 15 & 3123.7 & 0 & 4073.1 & 16 & 3198.8 & 0\\
rc105 & 36.9 & 22 & 4030.0 & 0 & 32.8 & 22 & 4030.0 & 0 & 1348.8 & 24 & 4142.6 & 0\\
rc106 & 47.9 & 22 & 3735.4 & 0 & 42.8 & 22 & 3735.4 & 0 & 1360.1 & 21 & 3661.8 & 0\\
rc107 & 24.8 & 22 & 3515.4 & 0 & 16.2 & 22 & 3515.4 & 0 & 2488.5 & 22 & 3504.7 & 0\\
rc108 & 23.7 & 17 & 3115.6 & 0 & 10.1 & 17 & 3115.6 & 0 & 3763.3 & 18 & 3191.4 & 0\\
rc201 & 479.4 & 14 & 7851.4 & 0 & 470.6 & 14 & 7851.4 & 0 & 2118.6 & 14 & 7992.2 & 0\\
rc202 & 346.8 & 13 & 7266.6 & 0 & 345.7 & 13 & 7266.6 & 0 & 3475.4 & 12 & 7203.4 & 0\\
rc203 & 417.6 & 11 & 5670.8 & 0 & 400.1 & 11 & 5670.8 & 0 & 4632.7 & 11 & 5723.4 & 0\\
rc204 & 941.1 & 9 & 4235.3 & 0 & 906.5 & 9 & 4235.3 & 0 & --- & --- & --- & ---\\
rc205 & 495.4 & 13 & 7177.7 & 0 & 489.4 & 13 & 7177.7 & 0 & 3102.3 & 13 & 7081.8 & 0\\
rc206 & 772.5 & 12 & 6883.9 & 0 & 768.8 & 12 & 6883.9 & 0 & 4046.7 & 12 & 6880.8 & 0\\
rc207 & 761.0 & 12 & 6388.6 & 0 & 755.6 & 12 & 6388.6 & 0 & 4897.0 & 12 & 5790.4 & 0\\
rc208 & 1399.6 & 11 & 4764.5 & 0 & 1438.3 & 11 & 4764.5 & 0 & --- & --- & --- & ---\\
\hline
\end{tabular}
\caption{Results for instances without service stations.}\label{tab:0-service}
}
\end{table}

\begin{table}[htb]
\centering
\renewcommand{\arraystretch}{0.7}
{\scriptsize

\begin{tabular}{lrrrrrrrrrrrr}
\hline
& DDD &&&& DDD-PL &&&& MIP &&& \\
\hline
Instance & CPU & Veh. & Compl. & Repl. & CPU & Veh. & Compl. & Repl. & CPU & Veh. & Compl. & Repl. \\
\hline
c101 & 112.7 & 13 & 11589.1 & 0 & 105.7 & 13 & 11589.1 & 0 & 1570.7 & 14 & 11507.9 & 0\\
c102 & 662.5 & 17 & 12063.2 & 1 & 650.8 & 17 & 12063.2 & 1 & 3620.6 & 15 & 11974.3 & 0\\
c103 & 296.2 & 11 & 12038.8 & 2 & 274.7 & 11 & 12038.8 & 2 & 5572.4 & 12 & 12213.4 & 2\\
c104 & 80.9 & 11 & 11590.4 & 1 & 56.8 & 11 & 11590.4 & 1 & --- & --- & --- & ---\\
c105 & 105.9 & 15 & 10946.8 & 0 & 92.0 & 15 & 10946.8 & 0 & 2194.6 & 14 & 10930.6 & 0\\
c106 & 124.4 & 16 & 12292.3 & 2 & 114.3 & 16 & 12292.3 & 2 & 2367.1 & 16 & 12343.1 & 2\\
c107 & 138.6 & 13 & 10653.1 & 0 & 120.5 & 13 & 10653.1 & 0 & 2999.0 & 12 & 10487.1 & 0\\
c108 & 151.1 & 13 & 10884.0 & 0 & 127.1 & 13 & 10884.0 & 0 & 3583.9 & 15 & 11077.0 & 0\\
c109 & 179.7 & 13 & 10612.8 & 0 & 147.9 & 13 & 10612.8 & 0 & 5342.9 & 13 & 10606.5 & 0\\
c201 & 78.1 & 4 & 10800.4 & 2 & 16.1 & 4 & 10800.4 & 2 & 3493.4 & 4 & 10864.3 & 2\\
c202 & 166.5 & 8 & 13213.7 & 11 & 33.7 & 8 & 13213.7 & 11 & 5350.8 & 9 & 14985.1 & 9\\
c203 & 947.5 & 6 & 14026.1 & 8 & 700.4 & 6 & 14026.1 & 8 & --- & --- & --- & ---\\
c204 & 829.3 & 5 & 13141.1 & 8 & 671.6 & 5 & 13141.1 & 8 & --- & --- & --- & ---\\
c205 & 119.8 & 4 & 10609.8 & 2 & 36.4 & 4 & 10609.8 & 2 & 4953.0 & 4 & 10576.7 & 2\\
c206 & 481.9 & 5 & 12078.3 & 8 & 125.9 & 5 & 12078.3 & 8 & --- & --- & --- & ---\\
c207 & 742.5 & 5 & 13219.8 & 11 & 465.2 & 5 & 13219.8 & 11 & --- & --- & --- & ---\\
c208 & 278.8 & 4 & 10388.0 & 2 & 140.0 & 4 & 10388.0 & 2 & --- & --- & --- & ---\\
r101 & 21.4 & 28 & 4596.6 & 0 & 18.9 & 28 & 4596.6 & 0 & 486.3 & 28 & 4619.5 & 0\\
r102 & 41.0 & 23 & 3864.5 & 0 & 32.5 & 23 & 3864.5 & 0 & 2332.4 & 23 & 3835.1 & 0\\
r103 & 44.3 & 19 & 3470.5 & 0 & 26.4 & 19 & 3470.5 & 0 & 4156.0 & 19 & 3491.1 & 0\\
r104 & 42.7 & 15 & 2951.3 & 0 & 14.5 & 15 & 2951.3 & 0 & 6477.7 & 14 & 2839.2 & 0\\
r105 & 46.0 & 22 & 3717.0 & 0 & 41.5 & 22 & 3717.0 & 0 & 884.5 & 23 & 3913.1 & 0\\
r106 & 42.9 & 21 & 3373.2 & 0 & 30.2 & 21 & 3373.2 & 0 & 2896.1 & 21 & 3332.1 & 0\\
r107 & 46.4 & 19 & 3093.4 & 0 & 27.5 & 19 & 3093.4 & 0 & 4605.1 & 20 & 3241.5 & 0\\
r108 & 42.1 & 16 & 2937.1 & 0 & 12.4 & 16 & 2937.1 & 0 & 6614.1 & 16 & 2917.9 & 0\\
r109 & 57.3 & 20 & 3273.5 & 0 & 48.9 & 20 & 3273.5 & 0 & 1912.1 & 19 & 3185.3 & 0\\
r110 & 40.5 & 18 & 2964.8 & 0 & 27.5 & 18 & 2964.8 & 0 & 3259.8 & 18 & 3036.1 & 0\\
r111 & 38.8 & 18 & 3126.4 & 0 & 22.7 & 18 & 3126.4 & 0 & 3862.2 & 18 & 3079.2 & 0\\
r112 & 33.3 & 15 & 2638.0 & 0 & 10.5 & 15 & 2638.0 & 0 & 5568.3 & 15 & 2632.5 & 0\\
r201 & 244.7 & 7 & 5173.2 & 10 & 226.5 & 7 & 5173.2 & 10 & 3869.1 & 7 & 5197.1 & 8\\
r202 & 1636.8 & 9 & 5846.2 & 8 & 1294.1 & 9 & 5846.2 & 8 & 6243.5 & 8 & 5509.5 & 7\\
r203 & 587.8 & 8 & 4510.7 & 5 & 408.5 & 8 & 4510.7 & 5 & --- & --- & --- & ---\\
r204 & 307.2 & 5 & 3675.6 & 6 & 173.5 & 5 & 3675.6 & 6 & --- & --- & --- & ---\\
r205 & 1229.0 & 8 & 4897.6 & 7 & 795.3 & 8 & 4897.6 & 7 & --- & --- & --- & ---\\
r206 & 1649.9 & 7 & 4529.1 & 7 & 1147.0 & 7 & 4529.1 & 7 & --- & --- & --- & ---\\
r207 & 758.8 & 7 & 3606.0 & 4 & 467.2 & 7 & 3606.0 & 4 & --- & --- & --- & ---\\
r208 & 919.6 & 5 & 3417.8 & 6 & 474.3 & 5 & 3417.8 & 6 & --- & --- & --- & ---\\
r209 & 1403.9 & 7 & 4583.4 & 8 & 935.3 & 7 & 4583.4 & 8 & --- & --- & --- & ---\\
r210 & 2123.2 & 7 & 4276.1 & 6 & 1503.8 & 7 & 4276.1 & 6 & --- & --- & --- & ---\\
r211 & 2597.0 & 6 & 3616.2 & 5 & 1894.3 & 6 & 3616.2 & 5 & --- & --- & --- & ---\\
rc101 & 35.5 & 25 & 4553.7 & 0 & 24.0 & 25 & 4553.7 & 0 & 694.4 & 24 & 4400.1 & 0\\
rc102 & 48.3 & 23 & 4007.6 & 0 & 28.9 & 23 & 4007.6 & 0 & 2059.9 & 23 & 3933.6 & 0\\
rc103 & 50.0 & 21 & 3739.8 & 0 & 22.8 & 21 & 3739.8 & 0 & 3635.7 & 21 & 3774.4 & 0\\
rc104 & 56.5 & 15 & 3123.7 & 0 & 17.5 & 15 & 3123.7 & 0 & 5767.0 & 16 & 3198.8 & 0\\
rc105 & 49.4 & 22 & 4030.0 & 0 & 32.5 & 22 & 4030.0 & 0 & 1452.3 & 24 & 4142.6 & 0\\
rc106 & 64.1 & 22 & 3735.4 & 0 & 42.1 & 22 & 3735.4 & 0 & 1437.7 & 21 & 3661.8 & 0\\
rc107 & 35.7 & 22 & 3515.4 & 0 & 16.3 & 22 & 3515.4 & 0 & 2801.4 & 22 & 3504.7 & 0\\
rc108 & 35.6 & 17 & 3115.6 & 0 & 10.4 & 17 & 3115.6 & 0 & 4243.9 & 18 & 3191.4 & 0\\
rc201 & 1068.9 & 11 & 6859.0 & 7 & 760.0 & 11 & 6859.0 & 7 & 3792.7 & 8 & 6058.0 & 8\\
rc202 & 614.7 & 8 & 5630.5 & 9 & 435.3 & 8 & 5630.5 & 9 & 5983.0 & 8 & 5627.6 & 9\\
rc203 & 484.1 & 9 & 4591.2 & 7 & 312.3 & 9 & 4591.2 & 7 & --- & --- & --- & ---\\
rc204 & 1307.1 & 5 & 3924.7 & 6 & 940.8 & 5 & 3924.7 & 6 & --- & --- & --- & ---\\
rc205 & 919.5 & 10 & 6271.4 & 10 & 667.1 & 10 & 6271.4 & 10 & 5332.3 & 9 & 5982.3 & 10\\
rc206 & 1288.5 & 7 & 5019.8 & 9 & 893.5 & 7 & 5019.8 & 9 & --- & --- & --- & ---\\
rc207 & 831.2 & 9 & 5249.0 & 9 & 575.2 & 9 & 5249.0 & 9 & --- & --- & --- & ---\\
rc208 & 1614.4 & 6 & 4136.9 & 7 & 1196.6 & 6 & 4136.9 & 7 & --- & --- & --- & ---\\
\hline
\end{tabular}
\caption{Results for instances with recharging at the depot.}\label{tab:1-service}
}
\end{table}

\begin{table}[htb]
\centering
\renewcommand{\arraystretch}{0.7}
{\scriptsize

\begin{tabular}{lrrrrrrrrrrrr}
\hline
& DDD &&&& DDD-PL &&&& MIP &&& \\
\hline
Instance & CPU & Veh. & Compl. & Repl. & CPU & Veh. & Compl. & Repl. & CPU & Veh. & Compl. & Repl. \\
\hline
c101 & 216.7 & 13 & 11589.1 & 0 & 148.3 & 13 & 11589.1 & 0 & 1839.8 & 14 & 11507.9 & 0\\
c102 & 898.6 & 17 & 12063.2 & 1 & 663.6 & 17 & 12063.2 & 1 & 4255.9 & 15 & 12042.2 & 2\\
c103 & 629.0 & 13 & 12129.1 & 4 & 416.1 & 13 & 12129.1 & 4 & 6915.7 & 12 & 12219.9 & 2\\
c104 & 122.2 & 11 & 11719.3 & 2 & 60.2 & 11 & 11719.3 & 2 & --- & --- & --- & ---\\
c105 & 144.1 & 15 & 10946.8 & 0 & 94.9 & 15 & 10946.8 & 0 & 2538.3 & 14 & 10930.6 & 0\\
c106 & 185.9 & 16 & 12365.6 & 2 & 126.3 & 16 & 12365.6 & 2 & 3064.5 & 16 & 12297.4 & 3\\
c107 & 189.1 & 13 & 10653.1 & 0 & 123.8 & 13 & 10653.1 & 0 & 3476.7 & 12 & 10487.1 & 0\\
c108 & 214.9 & 13 & 10884.0 & 0 & 134.2 & 13 & 10884.0 & 0 & 4384.8 & 15 & 11077.0 & 0\\
c109 & 249.4 & 13 & 10612.8 & 0 & 155.8 & 13 & 10612.8 & 0 & 6372.4 & 13 & 10606.5 & 0\\
c201 & 149.3 & 4 & 10754.0 & 2 & 19.7 & 4 & 10754.0 & 2 & 4363.4 & 4 & 10880.6 & 2\\
c202 & 799.7 & 8 & 13114.4 & 10 & 211.8 & 8 & 13114.4 & 10 & --- & --- & --- & ---\\
c203 & 2078.1 & 6 & 14079.3 & 7 & 1006.9 & 6 & 14079.3 & 7 & --- & --- & --- & ---\\
c204 & 3071.9 & 5 & 13153.1 & 8 & 1559.3 & 5 & 13153.1 & 8 & --- & --- & --- & ---\\
c205 & 232.9 & 4 & 10622.9 & 2 & 38.4 & 4 & 10622.9 & 2 & 6382.4 & 4 & 10656.9 & 2\\
c206 & 932.9 & 5 & 12398.6 & 9 & 335.1 & 5 & 12398.6 & 9 & --- & --- & --- & ---\\
c207 & 1361.8 & 6 & 13238.2 & 10 & 331.9 & 6 & 13238.2 & 10 & --- & --- & --- & ---\\
c208 & 496.5 & 4 & 10388.0 & 2 & 161.3 & 4 & 10388.0 & 2 & --- & --- & --- & ---\\
r101 & 28.2 & 28 & 4596.6 & 0 & 18.9 & 28 & 4596.6 & 0 & 488.0 & 28 & 4619.5 & 0\\
r102 & 54.7 & 23 & 3864.5 & 0 & 32.6 & 23 & 3864.5 & 0 & 2384.4 & 23 & 3835.1 & 0\\
r103 & 59.2 & 19 & 3470.5 & 0 & 26.4 & 19 & 3470.5 & 0 & 4282.0 & 19 & 3491.1 & 0\\
r104 & 57.7 & 15 & 2951.3 & 0 & 14.9 & 15 & 2951.3 & 0 & 6679.9 & 14 & 2839.2 & 0\\
r105 & 60.1 & 22 & 3717.0 & 0 & 42.1 & 22 & 3717.0 & 0 & 880.3 & 23 & 3913.1 & 0\\
r106 & 56.9 & 21 & 3373.2 & 0 & 30.8 & 21 & 3373.2 & 0 & 2952.0 & 21 & 3332.1 & 0\\
r107 & 62.6 & 19 & 3093.4 & 0 & 27.8 & 19 & 3093.4 & 0 & 4969.0 & 20 & 3241.5 & 0\\
r108 & 56.8 & 16 & 2937.1 & 0 & 12.9 & 16 & 2937.1 & 0 & 6827.5 & 16 & 2917.9 & 0\\
r109 & 75.5 & 20 & 3273.5 & 0 & 48.7 & 20 & 3273.5 & 0 & 1916.4 & 19 & 3185.3 & 0\\
r110 & 54.1 & 18 & 2964.8 & 0 & 27.5 & 18 & 2964.8 & 0 & 3299.7 & 18 & 3036.1 & 0\\
r111 & 51.6 & 18 & 3126.4 & 0 & 22.8 & 18 & 3126.4 & 0 & 3975.8 & 18 & 3079.2 & 0\\
r112 & 44.8 & 15 & 2638.0 & 0 & 10.5 & 15 & 2638.0 & 0 & 5667.4 & 15 & 2632.5 & 0\\
r201 & 362.1 & 8 & 5228.0 & 8 & 253.5 & 8 & 5228.0 & 8 & 4074.7 & 7 & 5237.2 & 9\\
r202 & 1705.6 & 8 & 5562.3 & 7 & 1269.4 & 8 & 5562.3 & 7 & 6423.5 & 8 & 5408.0 & 7\\
r203 & 561.2 & 8 & 4493.1 & 5 & 390.5 & 8 & 4493.1 & 5 & --- & --- & --- & ---\\
r204 & 849.3 & 5 & 3735.3 & 6 & 510.2 & 5 & 3735.3 & 6 & --- & --- & --- & ---\\
r205 & 1721.7 & 7 & 4660.5 & 7 & 1159.9 & 7 & 4660.5 & 7 & --- & --- & --- & ---\\
r206 & 2038.9 & 7 & 4570.1 & 7 & 1455.0 & 7 & 4570.1 & 7 & --- & --- & --- & ---\\
r207 & 669.9 & 7 & 3586.8 & 4 & 373.7 & 7 & 3586.8 & 4 & --- & --- & --- & ---\\
r208 & 996.6 & 5 & 3303.6 & 5 & 539.1 & 5 & 3303.6 & 5 & --- & --- & --- & ---\\
r209 & 1829.5 & 8 & 4613.9 & 7 & 1216.3 & 8 & 4613.9 & 7 & --- & --- & --- & ---\\
r210 & 913.2 & 8 & 4598.5 & 7 & 621.5 & 8 & 4598.5 & 7 & --- & --- & --- & ---\\
r211 & 2836.2 & 6 & 3612.4 & 5 & 2059.4 & 6 & 3612.4 & 5 & --- & --- & --- & ---\\
rc101 & 35.7 & 25 & 4553.7 & 0 & 24.1 & 25 & 4553.7 & 0 & 707.4 & 24 & 4400.1 & 0\\
rc102 & 50.4 & 23 & 4007.6 & 0 & 29.2 & 23 & 4007.6 & 0 & 2307.7 & 23 & 3933.6 & 0\\
rc103 & 54.1 & 21 & 3739.8 & 0 & 23.5 & 21 & 3739.8 & 0 & 4111.0 & 21 & 3774.4 & 0\\
rc104 & 64.3 & 15 & 3123.7 & 0 & 18.4 & 15 & 3123.7 & 0 & 6503.2 & 16 & 3198.8 & 0\\
rc105 & 50.5 & 22 & 4030.0 & 0 & 32.8 & 22 & 4030.0 & 0 & 1548.1 & 24 & 4142.6 & 0\\
rc106 & 65.3 & 22 & 3735.4 & 0 & 42.1 & 22 & 3735.4 & 0 & 1513.0 & 21 & 3661.8 & 0\\
rc107 & 37.1 & 22 & 3515.4 & 0 & 16.5 & 22 & 3515.4 & 0 & 2953.0 & 22 & 3504.7 & 0\\
rc108 & 39.1 & 17 & 3115.6 & 0 & 11.0 & 17 & 3115.6 & 0 & 4588.9 & 18 & 3191.4 & 0\\
rc201 & 500.4 & 9 & 6035.4 & 10 & 313.2 & 9 & 6035.4 & 10 & 6469.3 & 8 & 5757.2 & 9\\
rc202 & 506.7 & 9 & 5530.7 & 11 & 342.1 & 9 & 5530.7 & 11 & --- & --- & --- & ---\\
rc203 & 484.8 & 8 & 4406.0 & 8 & 317.1 & 8 & 4406.0 & 8 & --- & --- & --- & ---\\
rc204 & 501.9 & 6 & 4009.1 & 6 & 287.5 & 6 & 4009.1 & 6 & --- & --- & --- & ---\\
rc205 & 1207.3 & 9 & 5812.7 & 12 & 818.9 & 9 & 5812.7 & 12 & --- & --- & --- & ---\\
rc206 & 2284.8 & 8 & 5272.5 & 9 & 1426.5 & 8 & 5272.5 & 9 & --- & --- & --- & ---\\
rc207 & 1412.4 & 8 & 5032.1 & 10 & 902.4 & 8 & 5032.1 & 10 & --- & --- & --- & ---\\
rc208 & 3377.5 & 7 & 4078.1 & 6 & 2038.5 & 7 & 4078.1 & 6 & --- & --- & --- & ---\\
\hline
\end{tabular}
\caption{Results for instances with recharging at the depot and one station per city.}\label{tab:5-service}
}
\end{table}

\begin{table}[htb]
\centering
\renewcommand{\arraystretch}{0.7}
{\scriptsize

\begin{tabular}{lrrrrrrrrrrrr}
\hline
& DDD &&&& DDD-PL &&&& MIP &&& \\
\hline
Instance & CPU & Veh. & Compl. & Repl. & CPU & Veh. & Compl. & Repl. & CPU & Veh. & Compl. & Repl. \\
\hline
c101 & 169.5 & 13 & 11589.1 & 0 & 115.3 & 13 & 11589.1 & 0 & 2059.8 & 14 & 11507.9 & 0\\
c102 & 930.9 & 17 & 12063.2 & 1 & 681.4 & 17 & 12063.2 & 1 & 4734.5 & 15 & 12079.1 & 1\\
c103 & 432.2 & 11 & 11992.5 & 3 & 292.6 & 11 & 11992.5 & 3 & --- & --- & --- & ---\\
c104 & 152.9 & 11 & 11732.0 & 3 & 63.6 & 11 & 11732.0 & 3 & --- & --- & --- & ---\\
c105 & 156.8 & 15 & 10946.8 & 0 & 99.1 & 15 & 10946.8 & 0 & 2851.8 & 14 & 10930.6 & 0\\
c106 & 202.5 & 16 & 12389.2 & 3 & 130.6 & 16 & 12389.2 & 3 & 3611.1 & 16 & 12262.3 & 4\\
c107 & 202.1 & 13 & 10653.1 & 0 & 128.3 & 13 & 10653.1 & 0 & 3928.6 & 12 & 10487.1 & 0\\
c108 & 246.3 & 13 & 10884.0 & 0 & 149.1 & 13 & 10884.0 & 0 & 5407.3 & 15 & 11077.0 & 0\\
c109 & 272.4 & 13 & 10612.8 & 0 & 161.3 & 13 & 10612.8 & 0 & --- & --- & --- & ---\\
c201 & 414.4 & 4 & 10758.8 & 2 & 31.1 & 4 & 10758.8 & 2 & 6725.1 & 4 & 10778.2 & 1\\
c202 & 1162.7 & 7 & 12369.5 & 9 & 206.6 & 7 & 12369.5 & 9 & --- & --- & --- & ---\\
c203 & 2520.5 & 6 & 14003.2 & 8 & 579.7 & 6 & 14003.2 & 8 & --- & --- & --- & ---\\
c204 & 4710.0 & 5 & 13259.3 & 7 & 1262.7 & 5 & 13259.3 & 7 & --- & --- & --- & ---\\
c205 & 677.5 & 4 & 10071.8 & 2 & 63.4 & 4 & 10071.8 & 2 & --- & --- & --- & ---\\
c206 & 1413.5 & 5 & 12160.6 & 8 & 171.0 & 5 & 12160.6 & 8 & --- & --- & --- & ---\\
c207 & {\it 12235.4} & {\it 5} & {\it 12469.8} & {\it 8} & 6787.1 & 5 & 12469.8 & 8 & --- & --- & --- & ---\\
c208 & 2069.5 & 4 & 10040.9 & 2 & 612.1 & 4 & 10040.9 & 2 & --- & --- & --- & ---\\
r101 & 28.5 & 28 & 4596.6 & 0 & 18.9 & 28 & 4596.6 & 0 & 504.2 & 28 & 4619.5 & 0\\
r102 & 58.1 & 23 & 3864.5 & 0 & 33.3 & 23 & 3864.5 & 0 & 2754.6 & 23 & 3835.1 & 0\\
r103 & 67.7 & 19 & 3470.5 & 0 & 27.8 & 19 & 3470.5 & 0 & 5128.7 & 19 & 3491.1 & 0\\
r104 & 68.4 & 15 & 2951.3 & 0 & 16.4 & 15 & 2951.3 & 0 & --- & --- & --- & ---\\
r105 & 60.8 & 22 & 3717.0 & 0 & 41.8 & 22 & 3717.0 & 0 & 931.2 & 23 & 3913.1 & 0\\
r106 & 61.2 & 21 & 3373.2 & 0 & 31.5 & 21 & 3373.2 & 0 & 3440.5 & 21 & 3332.1 & 0\\
r107 & 71.3 & 19 & 3093.4 & 0 & 28.8 & 19 & 3093.4 & 0 & 5651.0 & 20 & 3241.5 & 0\\
r108 & 68.0 & 16 & 2937.1 & 0 & 14.2 & 16 & 2937.1 & 0 & --- & --- & --- & ---\\
r109 & 79.1 & 20 & 3273.5 & 0 & 50.2 & 20 & 3273.5 & 0 & 2070.6 & 19 & 3185.3 & 0\\
r110 & 58.9 & 18 & 2964.8 & 0 & 28.5 & 18 & 2964.8 & 0 & 3690.6 & 18 & 3036.1 & 0\\
r111 & 59.8 & 18 & 3126.4 & 0 & 23.4 & 18 & 3126.4 & 0 & 4616.6 & 18 & 3079.2 & 0\\
r112 & 56.5 & 15 & 2638.0 & 0 & 11.8 & 15 & 2638.0 & 0 & 6645.8 & 15 & 2632.5 & 0\\
r201 & 716.5 & 7 & 5192.1 & 9 & 352.9 & 7 & 5192.1 & 9 & --- & --- & --- & ---\\
r202 & 1038.9 & 8 & 5185.6 & 10 & 688.2 & 8 & 5185.6 & 10 & --- & --- & --- & ---\\
r203 & 903.3 & 8 & 4053.7 & 7 & 535.2 & 8 & 4053.7 & 7 & --- & --- & --- & ---\\
r204 & 324.3 & 4 & 3698.4 & 5 & 143.1 & 4 & 3698.4 & 5 & --- & --- & --- & ---\\
r205 & 4064.5 & 7 & 4353.7 & 8 & 1197.2 & 7 & 4353.7 & 8 & --- & --- & --- & ---\\
r206 & 2180.7 & 6 & 4183.6 & 8 & 1021.8 & 6 & 4183.6 & 8 & --- & --- & --- & ---\\
r207 & 6237.2 & 6 & 3756.8 & 5 & 983.5 & 6 & 3756.8 & 5 & --- & --- & --- & ---\\
r208 & 1360.3 & 5 & 3478.2 & 6 & 570.5 & 5 & 3478.2 & 6 & --- & --- & --- & ---\\
r209 & 2220.9 & 7 & 4082.1 & 7 & 1357.0 & 7 & 4082.1 & 7 & --- & --- & --- & ---\\
r210 & 1814.5 & 7 & 4512.7 & 7 & 1036.7 & 7 & 4512.7 & 7 & --- & --- & --- & ---\\
r211 & 6747.9 & 5 & 3571.9 & 6 & 3528.3 & 5 & 3571.9 & 6 & --- & --- & --- & ---\\
rc101 & 36.0 & 25 & 4553.7 & 0 & 25.0 & 25 & 4553.7 & 0 & 727.9 & 24 & 4400.1 & 0\\
rc102 & 52.1 & 23 & 4007.6 & 0 & 29.5 & 23 & 4007.6 & 0 & 2439.6 & 23 & 3933.6 & 0\\
rc103 & 58.6 & 21 & 3739.8 & 0 & 24.7 & 21 & 3739.8 & 0 & 4504.1 & 21 & 3774.4 & 0\\
rc104 & 71.9 & 15 & 3123.7 & 0 & 19.6 & 15 & 3123.7 & 0 & --- & --- & --- & ---\\
rc105 & 51.6 & 22 & 4030.0 & 0 & 33.7 & 22 & 4030.0 & 0 & 1613.9 & 24 & 4142.6 & 0\\
rc106 & 65.4 & 22 & 3735.4 & 0 & 43.3 & 22 & 3735.4 & 0 & 1538.2 & 21 & 3661.8 & 0\\
rc107 & 39.9 & 22 & 3515.4 & 0 & 17.1 & 22 & 3515.4 & 0 & 3154.1 & 22 & 3504.7 & 0\\
rc108 & 43.6 & 17 & 3115.6 & 0 & 11.7 & 17 & 3115.6 & 0 & 4950.6 & 18 & 3191.4 & 0\\
rc201 & 559.2 & 8 & 5729.6 & 12 & 310.7 & 8 & 5729.6 & 12 & --- & --- & --- & ---\\
rc202 & 633.2 & 8 & 5248.6 & 10 & 415.8 & 8 & 5248.6 & 10 & --- & --- & --- & ---\\
rc203 & 645.4 & 8 & 4484.0 & 7 & 409.1 & 8 & 4484.0 & 7 & --- & --- & --- & ---\\
rc204 & 634.5 & 5 & 3835.8 & 6 & 350.5 & 5 & 3835.8 & 6 & --- & --- & --- & ---\\
rc205 & 1166.8 & 8 & 5418.7 & 12 & 762.6 & 8 & 5418.7 & 12 & --- & --- & --- & ---\\
rc206 & 2217.4 & 7 & 4900.0 & 9 & 1035.5 & 7 & 4900.0 & 9 & --- & --- & --- & ---\\
rc207 & 1763.2 & 7 & 4601.0 & 9 & 969.5 & 7 & 4601.0 & 9 & --- & --- & --- & ---\\
rc208 & 4365.7 & 6 & 4249.1 & 8 & 2435.1 & 6 & 4249.1 & 8 & --- & --- & --- & ---\\
\hline
\end{tabular}
\caption{Results for instances with recharging at the depot and three stations per city.}\label{tab:13-service}
}
\end{table}

\begin{table}[htb]
\centering
\renewcommand{\arraystretch}{0.7}
{\scriptsize

\begin{tabular}{lrrrrrrrrrrrr}
\hline
& DDD &&&& DDD-PL &&&& MIP &&& \\
\hline
Instance & CPU & Veh. & Compl. & Repl. & CPU & Veh. & Compl. & Repl. & CPU & Veh. & Compl. & Repl. \\
\hline
c101 & 163.7 & 13 & 11589.1 & 0 & 109.4 & 13 & 11589.1 & 0 & 2075.3 & 14 & 11507.9 & 0\\
c102 & 1131.3 & 17 & 12092.0 & 1 & 819.4 & 17 & 12092.0 & 1 & 4747.5 & 15 & 12042.2 & 2\\
c103 & 474.6 & 11 & 11996.9 & 3 & 319.4 & 11 & 11996.9 & 3 & --- & --- & --- & ---\\
c104 & 157.1 & 11 & 11734.8 & 3 & 65.6 & 11 & 11734.8 & 3 & --- & --- & --- & ---\\
c105 & 157.6 & 15 & 10946.8 & 0 & 102.0 & 15 & 10946.8 & 0 & 2830.1 & 14 & 10930.6 & 0\\
c106 & 203.1 & 16 & 12518.0 & 3 & 136.7 & 16 & 12518.0 & 3 & 3492.2 & 16 & 12198.2 & 4\\
c107 & 204.3 & 13 & 10653.1 & 0 & 131.1 & 13 & 10653.1 & 0 & 3925.5 & 12 & 10487.1 & 0\\
c108 & 280.3 & 13 & 10884.0 & 0 & 172.2 & 13 & 10884.0 & 0 & 5678.6 & 15 & 11077.0 & 0\\
c109 & 277.5 & 13 & 10612.8 & 0 & 167.8 & 13 & 10612.8 & 0 & --- & --- & --- & ---\\
c201 & 575.9 & 4 & 10778.6 & 3 & 42.8 & 4 & 10778.6 & 3 & 6230.4 & 4 & 10746.7 & 3\\
c202 & 1288.0 & 7 & 13123.4 & 9 & 178.5 & 7 & 13123.4 & 9 & --- & --- & --- & ---\\
c203 & 4930.8 & 6 & 14079.3 & 9 & 1421.5 & 6 & 14079.3 & 9 & --- & --- & --- & ---\\
c204 & 2570.6 & 6 & 14850.6 & 7 & 1425.2 & 6 & 14850.6 & 7 & --- & --- & --- & ---\\
c205 & 774.3 & 4 & 10224.1 & 3 & 77.2 & 4 & 10224.1 & 3 & --- & --- & --- & ---\\
c206 & 3835.5 & 5 & 12270.4 & 8 & 1221.1 & 5 & 12270.4 & 8 & --- & --- & --- & ---\\
c207 & {\it 8697.8} & {\it 5} & {\it 12641.5} & {\it 10} & 1124.0 & 5 & 12641.5 & 10 & --- & --- & --- & ---\\
c208 & 1783.7 & 4 & 10091.8 & 2 & 347.6 & 4 & 10091.8 & 2 & --- & --- & --- & ---\\
r101 & 28.8 & 28 & 4596.6 & 0 & 18.7 & 28 & 4596.6 & 0 & 513.5 & 28 & 4619.5 & 0\\
r102 & 59.6 & 23 & 3864.5 & 0 & 33.7 & 23 & 3864.5 & 0 & 2821.4 & 23 & 3835.1 & 0\\
r103 & 68.9 & 19 & 3470.5 & 0 & 28.3 & 19 & 3470.5 & 0 & 5237.5 & 19 & 3491.1 & 0\\
r104 & 71.6 & 15 & 2951.3 & 0 & 16.3 & 15 & 2951.3 & 0 & --- & --- & --- & ---\\
r105 & 61.3 & 22 & 3717.0 & 0 & 41.9 & 22 & 3717.0 & 0 & 956.4 & 23 & 3913.1 & 0\\
r106 & 62.9 & 21 & 3373.2 & 0 & 32.0 & 21 & 3373.2 & 0 & 3517.4 & 21 & 3332.1 & 0\\
r107 & 73.4 & 19 & 3093.4 & 0 & 29.0 & 19 & 3093.4 & 0 & 5790.4 & 20 & 3241.5 & 0\\
r108 & 71.6 & 16 & 2937.1 & 0 & 14.4 & 16 & 2937.1 & 0 & --- & --- & --- & ---\\
r109 & 87.4 & 20 & 3273.5 & 0 & 56.3 & 20 & 3273.5 & 0 & 2191.8 & 19 & 3185.3 & 0\\
r110 & 62.3 & 18 & 2964.8 & 0 & 29.4 & 18 & 2964.8 & 0 & 3970.4 & 18 & 3036.1 & 0\\
r111 & 61.8 & 18 & 3126.4 & 0 & 24.3 & 18 & 3126.4 & 0 & 4841.8 & 18 & 3079.2 & 0\\
r112 & 68.7 & 15 & 2638.0 & 0 & 12.9 & 15 & 2638.0 & 0 & --- & --- & --- & ---\\
r201 & 439.5 & 7 & 5183.7 & 10 & 188.9 & 7 & 5183.7 & 10 & --- & --- & --- & ---\\
r202 & 1228.2 & 7 & 5076.0 & 9 & 827.9 & 7 & 5076.0 & 9 & --- & --- & --- & ---\\
r203 & 655.5 & 8 & 4053.2 & 8 & 415.9 & 8 & 4053.2 & 8 & --- & --- & --- & ---\\
r204 & 435.3 & 5 & 3602.7 & 5 & 223.5 & 5 & 3602.7 & 5 & --- & --- & --- & ---\\
r205 & 5509.8 & 6 & 4306.3 & 8 & 1222.2 & 6 & 4306.3 & 8 & --- & --- & --- & ---\\
r206 & {\it 7646.0} & {\it 7} & {\it 4328.1} & {\it 7} & 2577.0 & 7 & 4328.1 & 7 & --- & --- & --- & ---\\
r207 & {\it 14042.4} & {\it 6} & {\it 3522.8} & {\it 5} & 2011.1 & 6 & 3522.8 & 5 & --- & --- & --- & ---\\
r208 & 1836.2 & 5 & 3418.4 & 6 & 518.6 & 5 & 3418.4 & 6 & --- & --- & --- & ---\\
r209 & 2253.6 & 6 & 4304.3 & 8 & 1080.7 & 6 & 4304.3 & 8 & --- & --- & --- & ---\\
r210 & 6243.7 & 7 & 4274.3 & 7 & 2822.7 & 7 & 4274.3 & 7 & --- & --- & --- & ---\\
r211 & {\it 28924.2} & {\it 5} & {\it 3508.6} & {\it 5} & {\it 8027.5} & {\it 5} & {\it 3508.6} & {\it 5} & --- & --- & --- & ---\\
rc101 & 36.2 & 25 & 4553.7 & 0 & 24.5 & 25 & 4553.7 & 0 & 738.7 & 24 & 4400.1 & 0\\
rc102 & 53.1 & 23 & 4007.6 & 0 & 29.7 & 23 & 4007.6 & 0 & 2517.9 & 23 & 3933.6 & 0\\
rc103 & 60.3 & 21 & 3739.8 & 0 & 24.2 & 21 & 3739.8 & 0 & 4609.4 & 21 & 3774.4 & 0\\
rc104 & 75.4 & 15 & 3123.7 & 0 & 19.5 & 15 & 3123.7 & 0 & --- & --- & --- & ---\\
rc105 & 52.1 & 22 & 4030.0 & 0 & 32.9 & 22 & 4030.0 & 0 & 1677.6 & 24 & 4142.6 & 0\\
rc106 & 67.3 & 22 & 3735.4 & 0 & 43.2 & 22 & 3735.4 & 0 & 1632.7 & 21 & 3661.8 & 0\\
rc107 & 41.0 & 22 & 3515.4 & 0 & 17.0 & 22 & 3515.4 & 0 & 3246.2 & 22 & 3504.7 & 0\\
rc108 & 47.3 & 17 & 3115.6 & 0 & 11.8 & 17 & 3115.6 & 0 & 5205.7 & 18 & 3191.4 & 0\\
rc201 & 817.4 & 8 & 5783.4 & 10 & 422.7 & 8 & 5783.4 & 10 & --- & --- & --- & ---\\
rc202 & 816.7 & 7 & 5042.8 & 11 & 461.9 & 7 & 5042.8 & 11 & --- & --- & --- & ---\\
rc203 & 781.4 & 9 & 4393.2 & 7 & 472.8 & 9 & 4393.2 & 7 & --- & --- & --- & ---\\
rc204 & 1351.0 & 5 & 3683.4 & 6 & 590.2 & 5 & 3683.4 & 6 & --- & --- & --- & ---\\
rc205 & 1202.1 & 8 & 5395.7 & 11 & 795.4 & 8 & 5395.7 & 11 & --- & --- & --- & ---\\
rc206 & 5815.9 & 7 & 4973.4 & 9 & 2376.0 & 7 & 4973.4 & 9 & --- & --- & --- & ---\\
rc207 & 2052.8 & 7 & 4784.9 & 8 & 1148.8 & 7 & 4784.9 & 8 & --- & --- & --- & ---\\
rc208 & 5848.9 & 6 & 4041.8 & 8 & 3193.4 & 6 & 4041.8 & 8 & --- & --- & --- & ---\\
\hline
\end{tabular}
\caption{Results for instances with recharging at the depot and five stations per city.}\label{tab:21-service}
}
\end{table}

\end{document}